\providecommand{\algorithmname}{Algorithm}
\theoremstyle{plain}
\theoremstyle{definition}
\theoremstyle{plain}
\theoremstyle{plain}
\newcommand{\RNum}[1]{\uppercase\expandafter{\romannumeral #1\relax}}
\newtheorem{define}{Definition}
\newtheorem{property}{Property}
\renewcommand\figurename{Fig.}
\begin{document}

\title{
Placement and Resource Allocation of Wireless-Powered Multiantenna UAV for Energy-Efficient Multiuser NOMA}

\author{Zhongyu Wang,~\emph{Student~Member,~IEEE},~Tiejun~Lv,~\emph{Senior Member,~IEEE},\\~Jie~Zeng,~\emph{Senior Member,~IEEE},~and~Wei~Ni,~\emph{Senior~Member,~IEEE}
\thanks{This work was supported by the National Natural Science Foundation of China (No. 62001264), and the Natural Science Foundation of Beijing (No. L192025).
\emph{(Corresponding author: Tiejun Lv and Jie Zeng).}}

\thanks{
Z. Wang and T. Lv are with the School of Information and Communication Engineering, Beijing University of Posts and Telecommunications (BUPT), Beijing 100876, China (e-mail: \{zhongyuwang, lvtiejun\}@bupt.edu.cn).

J. Zeng is with the School of Cyberspace Science and Technology, Beijing Institute of Technology, Beijing 100081, China (e-mail: zengjie.2002@tsinghua.org.cn).

W. Ni is with Data61, Commonwealth Scientific and Industrial Research,
Sydney, NSW 2122, Australia (e-mail: wei.ni@data61.csiro.au).
}}
\maketitle

\begin{abstract}

This paper investigates a new downlink nonorthogonal multiple access (NOMA) system, where a multiantenna unmanned aerial vehicle (UAV) is powered by wireless power transfer (WPT) and serves as the base station for multiple pairs of ground users (GUs) running NOMA in each pair.
An energy efficiency (EE) maximization problem is formulated to jointly optimize the WPT time and the placement for the UAV, and the allocation of the UAV's transmit power between different NOMA user pairs and within each pair.
To efficiently solve this nonconvex problem, we decompose the problem into three subproblems using block coordinate descent.
For the subproblem of intra-pair power allocation within each NOMA user pair, we construct a supermodular game with confirmed convergence to a Nash equilibrium.
Given the intra-pair power allocation, successive convex approximation is applied to convexify and solve the subproblem of WPT time allocation and inter-pair power allocation between the user pairs.
Finally, we solve the subproblem of UAV placement by using the Lagrange multiplier method. Simulations show that our approach can substantially outperform its alternatives that do not use NOMA and WPT techniques or that do not optimize the UAV location.
\end{abstract}

\begin{IEEEkeywords}
Nonorthogonal multiple access, unmanned aerial vehicle, wireless power transfer, energy efficiency, supermodular game, Nash equilibrium.
\end{IEEEkeywords}

\section{Introduction }
\renewcommand\figurename{Fig.}
\subsection{Background and Motivation}
Sixth-generation (6G) communication networks are anticipated to
enhance network data availability, the mobile data
rate, and ubiquitous connectivity for proliferative Internet-of-Things (IoT) terminals \cite{sym12040676}.
In occasional events (e.g., disasters, concerts, and sports events),
conventional networks with rigid structures are not suitable \cite{8811579}.
Unmanned aerial vehicles (UAVs), also known as drones, have attracted extensive attention for their fast deployment and excellent flexibility \cite{9453824,8411465}.
Integration of UAVs into future wireless networks has become a popular
research area, with emphases on two aspects \cite{7463007}.
The first aspect pertains to the services that UAVs can provide to wireless networks.
UAVs can serve as mobile aerial base stations (BSs) or relays to support ground users (GUs) and provide   line-of-sight (LoS) for air-to-ground communications.
UAVs can provide fast network deployment, increase network capacity, and deliver connectivity to blindspots, e.g., for  disaster rescue \cite{7486987}.
The second aspect pertains to the services that wireless networks can provide to UAVs.
UAVs can be aerial users and served through cellular technologies, such as long-term evolution (LTE) and 5G \cite{azam2020energy}. 

A critical challenge faced by UAV-assisted wireless communications is the limited on-board energy of UAVs, affecting the endurance and longevity of UAV operations \cite{8663615}.   
By optimizing the elevation of a hovering UAV, a considerable amount of UAV energy is saved
due to the presence of communication--friendly LoS between the UAV and GU \cite{8885517}. 
Wireless power transfer (WPT) is another viable solution to prolong the flight time
of UAVs \cite{8995773}. 
To support power transmission over a long distance, resonant
beams were used to charge UAVs \cite{7589757}. 
In \cite{8995773}, the authors proposed a rotary-wing UAV-enabled mobile relaying system in which a laser was used to charge the UAV relay. 
In \cite{9209946}, a multi-UAV location-aware wireless-powered communication network was considered, where WPT-powered UAVs were used to monitor the environment and report to ground micro-BSs. 

Spectrum efficiency (SE) and energy efficiency (EE) are important measures for wireless communication systems \cite{8766136}. 
Nonorthogonal multiple access (NOMA) is a promising technique to improve the SE and EE \cite{7842433}. 
Compared with conventional orthogonal multiple access (OMA) which each user with a dedicated time or frequency resource, multiple users' signals are superposed in the power domain under NOMA \cite{8269066}. 
Successive interference cancellation (SIC) was adopted to decode superposed signals \cite{7273963}. 
A cooperative NOMA scheme was proposed to mitigate the interference and maximize the weighted sum rate of the UAV and GUs in an uplink UAV-assisted NOMA system by jointly optimizing the UAV transmit power and rate \cite{8641388}.   
Nasir \emph{et al.} \cite{8672190} employed NOMA in a UAV-aided multiuser communication system and solved the max-min rate optimization problem subject to the total power, bandwidth, UAV altitude, and antenna beamwidth. 
In \cite{8449221}, a multiantenna UAV generated directional beams and simultaneously served multiple users, where the sum rate was maximized by employing NOMA and beam scanning.

Existing studies of NOMA-based UAV communication systems have typically assumed that single-antenna UAVs hover at fixed positions and communicate with GUs without considering UAV placement \cite{8488592,9162133,8641388,9094731}. 
The trajectory of a single-antenna UAV was optimized to maximize the EE in \cite{7888557}. 
The transmit power and trajectory of the UAV were jointly optimized to maximize the minimum average throughput in \cite{8489918}.   
Wang \emph{et al.} \cite{8255570} treated a single-antenna UAV as a mobile access point to serve multiple GUs and exploited the mobility of the UAV to maximize the system throughput. 
Multiantenna techniques have been increasingly applied to UAV-assisted communication systems \cite{10.3389/frcmn.2021.696111}.
Multiantenna beamforming can improve the data throughput by exploiting spatial diversity and multiplexing, and thus potentially help reduce the propulsion energy consumption of the UAVs \cite{9055113}.
To the best of our knowledge, no existing studies have considered multiantenna UAVs supporting NOMA.  
In addition, most existing studies on UAV-enabled WPT systems under NOMA transmission have adopted directional radio frequency (RF) signals to (re)charge GUs under the assumption that the UAVs have abundant energy.

Distinctively different from the existing studies, we consider that a WPT-powered multiantenna UAV acts as an aerial BS and serves multiple pairs of users running NOMA with each pair.  We maximize the EE of the UAV's transmissions by optimizing the placement of the UAV, the allocation of the transmit power between different user pairs and within each user pair, and the WPT time between a power beacon and the UAV.
The key contributions of this paper are summarized as follows.

\begin{itemize}

 \item
 We construct a UAV-assisted NOMA network, where a WPT-powered, multiantenna UAV serves as the BS for multiple pairs of GUs running NOMA.
 We derive the total throughput of all user pairs and the energy consumption model for the power beacon, UAV, and GUs.
 Subsequently, the EE model is defined for the system. 

 \item
An EE maximization problem is formulated, which jointly optimizes the placement of the UAV, the allocation of the transmit power between different user pairs and within each user pair, and the WPT time.
The problem is nonconvex with coupled variables. We solve the problem efficiently by decomposing it into three subproblems:
1) intra-pair power allocation within each user pair;
2) WPT time allocation, and inter-pair power allocation between user pairs;
and 3) UAV location design. 

 \item
 A noncooperative game is constructed to solve the subproblem of intra-pair power allocation within each user pair, and a supermodular game-based algorithm is proposed with confirmed convergence to a Nash equilibrium.
 Given the intra-pair power allocation, successive convex approximation (SCA) is
invoked to solve the subproblem of the WPT time and inter-pair power allocation between user pairs. 
Then, we optimize the UAV's hovering location using the Lagrange multiplier method.

\end{itemize}


The remainder of this paper is organized as follows.
In Section II, the WPT-powered, UAV-assisted, multiuser NOMA system model is described and the considered problem is formulated.
In Section III, we decompose the problem into three subproblems and elaborate on the solutions to the subproblems.
Section IV shows simulation results.
Conclusions are drawn in Section V.

\section{System Model and Problem Formulation}

\subsection{System Model}

As shown in Fig. 1, we consider a UAV-assisted NOMA communication system in which a UAV equipped with $M$ antennas, harvests energy from a power beacon, and serves $2N$ single-antenna GUs.
The UAV is located at $(x_{0},y_{0},h)$, where $h$ is a constant altitude for safety considerations.
The GUs are distributed in a disc-shaped area with a radius of $R$, which is expected to be covered by the UAV.
The $2N$ GUs are divided into $N$ pairs.
The $N$ users closer to the coverage center are referred to as ``cell-center users'' (CC users), while the remaining $N$ users are ``cell-edge users'' (CE users).
The UAV employs NOMA to pair a CC user with a CE user \footnote{
Generally speaking, running SIC at the receiver of each user incurs additional complexity, depending on the number of users multiplexed in the same channel.
The hardware complexity and processing delay increase with the number of users in each channel. Moreover, the SIC can suffer from increasingly severe error propagations with a growing number of users per channel.
As a result, a channel is often assigned to no more than two users in practice \cite{7273963}.
}.
At a given altitude, the mean path loss between the user and the UAV is an increasing function of their horizontal distance \cite{8301585}.
The users can be ordered according to their distances from the coverage center, satisfying $R>{{d}_{N,2}}>{{d}_{N-1,2}}>\cdots >{{d}_{1,2}}>{{d}_{N,1}}>{{d}_{N-1,1}}>\cdots >{{d}_{1,1}}$.
The mean path losses of the users take the same order. Like Strategy 2 developed in \cite{8641425}, a two-user pairing scheme is adopted, where the closest CC user to the coverage center is paired with the further CE user.
The second-closest CC user is paired with the second-farthest CE user, and so on and so forth, until all users are paired.
This pairing scheme is illustrated in Fig. 2.
For each pair, the CC user can cancel the interference of the CE user by using SIC.

\begin{figure}[t]
\center
\includegraphics[width=9cm]{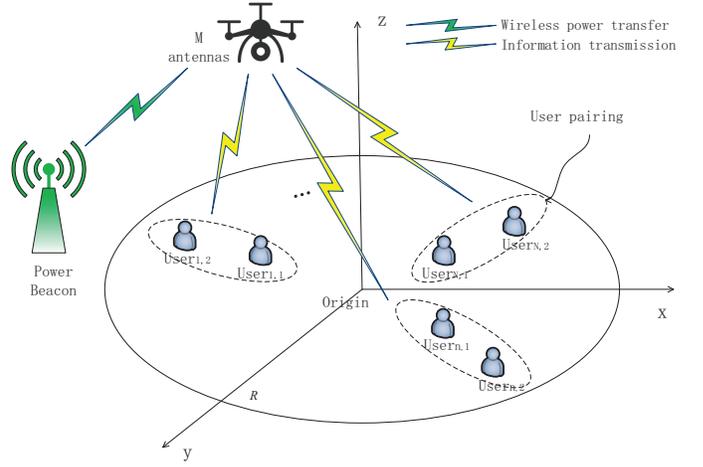}
\caption{UAV-based NOMA communication network with WPT. }
\end{figure}

\begin{figure}[t]
\center
\includegraphics[width=9cm]{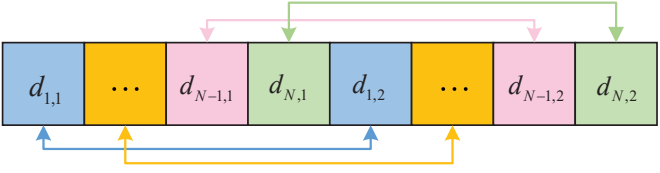}
\caption{Two-user NOMA pairing scheme for an even number of users. }
\end{figure}

We assume that a wireless-powered UAV is continuously available during the service time within its coverage and uses the harvest-then-transmit protocol to sustain its services \cite{9204681}. Given the self-sustainability of the wireless-powered UAV, it can remain at the locations as long as required or until the traffic distributions served by the UAVs change substantially.
There could be transitional states during which the traffic distribution changes and the UAV needs fly to new positions.
Consider the fast flight speed of the UAV and hence relatively negligible transitional states.
The UAV divides a communication cycle of $ T$ into two stages.
First, the power beacon transmits the RF energy to the UAV for time $\tau T$.
By using the harvested energy, the UAV transmits signals to all $N$ pairs of CC and CE users using downlink NOMA for time $(1-\tau )T$.
Without loss of generality, $T$ is normalized to 1.

In the WPT processing stage, the channel between the power beacon and the UAV is usually modelled as a LOS link, as given by
\begin{equation}\label{eq0}
  \mathbf{H}_{PU}= \sqrt{M^2\beta_{PU}} \mathbf{a}_{T}\left(\theta_{PU}\right) \mathbf{a}_{R}^{H}(\theta_{PU})
\end{equation}
where $\beta_{PU} = \beta_{0}d_{PU}^{-\beta}$ is the path loss.
$d_{PU} = \sqrt{(x_b-x_0)^2+(y_b-y_0)^2+h^2}$ is the distance between the power beacon and the UAV with the coordinates of the power beacon $(x_b, y_b, 0)$ being outside the coverage.
$\beta_{0}$ is the path loss at the reference distance of 1 m.
$\beta$ is the path loss exponent.
$\mathbf{a}_{T}\left(\theta_{PU}\right) =\frac{1}{\sqrt{M}}\left[1, e^{j 2 \pi \frac{d}{\lambda} \cos \theta_{PU}}, \ldots, e^{j 2 \pi \frac{(M-1) d}{\lambda} \cos \theta_{PU}}\right]^{T}$ is the array response of the transmitter and $\mathbf{a}_{R}\left(\theta_{PU}\right) =\frac{1}{\sqrt{M}}\left[1, e^{j 2 \pi \frac{d}{\lambda} \cos \theta_{PU}}, \ldots, e^{j 2 \pi \frac{(M-1) d}{\lambda} \cos \theta_{PU}}\right]^{T}$ is the array response of the receiver.
Here, $\lambda$ is the signal wavelength, $d$ is the spacing between adjacent elements, and $\theta_{PU}$ is the azimuth angle-of-departure (AOD) of the power beacon, and $\theta_{PU} = \text{arcsin} (h/d_{PU})$.

After the energy harvesting process, the transmit power of the UAV is given by
\begin{equation}\label{eq1}
  P_T=\tau \|\mathbf{H}_{PU}\mathbf{W}_{PU}\|^2  P \xi/(1-\tau),
\end{equation}
where $\mathbf{H}_{PU}$ is the channel between the power beacon and the UAV;
$\mathbf{W}_{PU}$ is the energy beamforming vector used to maximize the harvested energy; $\xi$ is the energy conversion efficiency; and $P$ is the transmit power of the power beacon.

The location of a user (i.e., a CC or CE user) is $(x_{n,k},y_{n,k},z_{n,k})$.
User $k$ ($k = 1 $ for a CC user or $k = 2 $ for a CE user) of the $n$-th pair is denoted by
User$_{n,k}$.
The distance between the UAV and User$_{n,k}$ is given by
\begin{equation}\label{eq3}
  d_{n,k} = \sqrt{(x_0-x_{n,k})^2+(y_0-y_{n,k})^2+(h-z_{n,k})^2}.
\end{equation}

Given the elevation $h$, we expect that the channels between the UAV and users are dominated by the LoS.
The channel gain between the UAV and User$_{n,k}$ is given by
\begin{equation}\label{eq2}
  \mathbf{h}_{n,k}=\sqrt{M \beta_{n,k}} \mathbf{a}\left(\theta_{n,k}\right),
\end{equation}
where $\beta_{n,k}$ denotes large-scale fading, $\mathbf{a}\left(\theta_{n,k}\right)$ is the array response, $\theta_{n,k}$ is the AOD of the user, and $\theta_{n,k} = \text{arcsin} (h/d_{n,k})$.
The large-scale path loss is
$\beta_{n,k} = \beta_{0}d_{n,k}^{-\beta}$ with $\beta_{0}$ being the channel power gain between the UAV and User$_{n,k}$.
Considering a uniform linear array (ULA) and a LoS channel, $\mathbf{a}\left(\theta_{n,k}\right)=\frac{1}{\sqrt{M}}\left[1, e^{j 2 \pi \frac{d}{\lambda} \cos \theta_{n,k}}, \ldots, e^{j 2 \pi \frac{(M-1) d}{\lambda} \cos \theta_{n,k}}\right]^{T}$, where $\lambda$ is the signal wavelength and $d$ is the spacing between adjacent antenna elements.

At the UAV, the transmit signal vector for the user pairs, after superposition coding, can be expressed as
\begin{equation}
\begin{aligned}
\mathbf{s} &=\left[\sqrt{\alpha_{1,1}} s_{1,1}+\sqrt{\alpha_{1,2}} s_{1,2}, \ldots, \sqrt{\alpha_{N, 1}} s_{ N,1}+\sqrt{\alpha_{ N,2}} s_{ N,2}\right]^{T} \\
&=\left[s_{1}, \ldots, s_{N}\right]^{T},
\end{aligned}
\end{equation}
where $s_{n,1}$ and $s_{n,2}$ denote the signals defined for the CC user and CE user in the $n$-th pair ($n = 1,\cdots,N$), respectively; and $\alpha_{n,1}$ and $\alpha_{n,2}$ are the corresponding power allocation coefficients, satisfying $\alpha_{n,1}+\alpha_{n,2} = 1$.
$s_n = \sqrt{\alpha_{n,1}} s_{n,1}+\sqrt{\alpha_{n,2}} s_{n,2}$. The transmit signals are precoded to reduce inter-pair interference, as given by
 \begin{equation}\label{eq5}
   \mathbf{x}=\mathbf{Q} \mathbf{s}=\left[\mathbf{q}_{1}, \ldots, \mathbf{q}_{n}, \ldots, \mathbf{q}_{N}\right] \mathbf{s},
 \end{equation}
where $\mathbf{Q}\in\mathbb{C}^{M\times N}$ is the precoding matrix and $\mathbf{q}_{n}\in\mathbb{C}^{M\times 1}$ is the precoding vector for $s_n$.
The received signal at the $k$-th ($k = 1, 2$) user of the $n$-th pair can be written as
\begin{equation}\label{eq6}
  {y}_{n, k}=\mathbf{h}_{n, k}^{H} \mathbf{q}_{n} s_{n}+\sum_{j=1, j \neq n}^{N} \mathbf{h}_{n, k}^{H} \mathbf{q}_{j} s_{j}+{o}_{n, k},
\end{equation}
where $\mathbf{h}_{n, k}\in\mathbb{C}^{M\times 1}$ is the channel state vector of the user;
$\sum_{j=1, j \neq n}^{N} \mathbf{h}_{n, k}^{H} \mathbf{q}_{j} s_{j}$ is the received interference signal of all pairs except the $n$-th pair;
and ${o}_{n, k}$ is the additive white Gaussian noise (AWGN) at the $k$-th user of the $n$-th pair, and  obeys the zero-mean unit-variance complex symmetric Gaussian distribution, i.e., ${o}_{n, k}\sim \mathbb{CN}(0,1)$.

The user in each user pair, e.g., User$_{n,k}$, detects the received signal ${y}_{n, k}$ using a combining variable ${w}_{n, k}$. Then, the postprocessed signal is given by
\begin{equation}\label{eq7}
{w}_{n, k}^{H} {y}_{n, k}={w}_{n, k}^{H} \mathbf{h}_{n, k}^{H}  \mathbf{Q} \mathbf{s}+ {w}_{n, k}^{H} {o}_{n, k}.
\end{equation}
Following the NOMA principle, the power allocation coefficients yield $\alpha_{n,1} \leq \alpha_{n,2}$ $(n\in {1, \cdots, N})$ so that CC User$_{n,1}$ can successfully decode CE User$_{n,2}$'s signal first and then subtract it from the superposition coded signal via SIC.

Inter-pair interference would decrease the received signal-to-interference-plus-noise ratio (SINR) and increase the system outage probability; therefore, we propose an interference cancellation method by constructing an interference cancellation combining matrix at the UAV to cancel the inter-pair interference for all CC users.
At the receiver end,
we take the widely used equal gain combining ${w}_{n,1}$ as the combining scheme of the CC user in each pair because of its low complexity.
As a result, the effective channel vector of the CC user is $\mathbf{g}_{n,1}={w}_{n,1}\mathbf{h}_{n,1}\in \mathbb{C}^{M\times 1}$.
In the system, precoding can help separate user data streams in different directions. Linear precoding, such as zero-forcing (ZF) precoding, MMSE precoding, and heuristic MMSE precoding \cite{christopoulos2015user,8641436,9048676,6876818}, is known for low complexity and easy implementation.
With a balanced consideration of precoding gain and complexity at a resource-restricted WPT-powered UAV, we choose ZF precoding because it is one of the most used linear precoding schemes and known to achieve full spatial multiplexing and multiuser diversity gains with a low complexity \cite{6425521}.
The precoding matrix can suppress the inter-pair interference among the CC users in all user pairs by satisfying
\begin{equation}\label{eq8}
  \left[\mathbf{g}_{n,1}^{H}, \ldots, \mathbf{g}_{n-1, 1}^{H}, \mathbf{g}_{n+1, 1}^{H}, \ldots, \mathbf{g}_{N, 1}^{H}\right]^{H} \mathbf{q}_{n}=\mathbf{0},
\end{equation}
where $\mathbf{q}_{n}$ can be obtained by singular value decomposition (SVD).
We can obtain the precoding matrix as $\mathbf{Q} =\left[\mathbf{q}_{1}, \ldots, \mathbf{q}_{n}, \ldots, \mathbf{q}_{N}\right]$.
The matrix $\mathbf{Q}$ eliminates the inter-pair interference since $\mathbf{g}_{n,1} \mathbf{q}_{m}={w}_{n,1}\mathbf{h}_{n,1}\mathbf{q}_{m} (n\neq m)$.
With the precoding at the UAV and the postprocessing of (\ref{eq6}) and (\ref{eq7}) at the CC users, the receive signal-to-noise ratio (SNR) of the CC user in the $n$-th pair for decoding its own signal is given by
\begin{equation}\label{eq9}
   \gamma_{n,1}^{CC} = {\alpha_{n,1} p_n |\mathbf{h}_{n,1}^{H}\mathbf{q}_{n}|^2},
\end{equation}
where $p_n$ is the transmit power of the UAV allocated to the $n$-th pair of the GUs.\\
\indent
We also construct the receive combining vectors for CE users.
Different from the inter-pair interference elimination for the CC user, the inter-pair interference of the CE user in each pair is suppressed with the receive combining vector.
Inspired by maximal ratio combining \cite{4100918}, the combining vector of the CE user in the $n$-th pair, i.e., $w_{n,2}$, is given by
\begin{equation}\label{eq10}
  w_{n,2} = \mathbf{h}_{n,2}^{H}\mathbf{v}_{n,2},
\end{equation}
where $\mathbf{v}_{n,2}\in\mathbb{C}^ { M \times 1}$. According to (\ref{eq7}) and (\ref{eq10}), the received signal of the CE user in the $n$-th pair is given by
\begin{equation}\label{eq11}
\begin{aligned}
&w_{n,2}^H y_{n,2} = \mathbf{v}_{n,2}^H \mathbf{h}_{n,2}\mathbf{h}_{n,2}^{H} \mathbf{q}_{n} s_{n}\\
&+ \sum_{m=1, m \neq n}^{N} \mathbf{v}_{n,2}^{H} \mathbf{h}_{n,2}\mathbf{h}_{n,2}^{H} \mathbf{q}_{m} s_{m}+ w_{n,2}^H n_{n,2}.
\end{aligned}
\end{equation}
The effective channel vector of the CE user in the $n$-th pair is $\mathbf{g}_{n,2} = \mathbf{h}_{n,2} \mathbf{h}_{n,2}^{H} \mathbf{q}_{n} \in \mathbb{C}^{M\times 1}$. Similarly, to mitigate the inter-pair interference of the CE users and increase the SINR of the CE users, the proposed combining variable $w_{2,n}$ needs to satisfy the following conditions:
\begin{equation}\label{eq12}
  \mathbf{v}_{n,2}^H\left[\mathbf{g}_{n,2}, \ldots, \mathbf{g}_{n-1, 2}, \mathbf{g}_{n+1, 2}, \ldots, \mathbf{g}_{N, 2}\right]=0.
\end{equation}
The vector $\mathbf{v}_{n,2}$ is the left singular vector of the effective matrix and can be obtained from the null space of the effective matrix. According to (\ref{eq12}), we construct the receive combining variable $w_{n,2}$ for each CE user.
The receive SINR of the CE user, ${\gamma}_{n,2}^{CE}$, is given by
\begin{equation}\label{eq13}
  {\gamma}_{n,2}^{CE}=\frac{\alpha_{n,2} p_{n}\left| \mathbf{h}_{n,2}^{H} \mathbf{q}_{n}\right|^{2}}{ 1 +\alpha_{n,1} p_{n}\left|  \mathbf{h}_{n,2}^{H} \mathbf{q}_{n}\right|^{2}}.
\end{equation}\\
\indent
According to (\ref{eq9}) and (\ref{eq13}), the data rates of the CC and CE users of the $n$-th pair are
\begin{equation}\label{eq-add1}
  R_{n,1}^{CC} = B\log _{2}\left(1+\gamma_{n,1}^{CC}\right)
\end{equation}
\begin{equation}\label{eq-add2}
  R_{n,2}^{CE} = B\log _{2}\left(1+\gamma_{n,2}^{CE}\right).
\end{equation}

The sum throughput of the considered system is given by
\begin{small}
\begin{equation}\label{eq14}
\begin{aligned}
& R_{sum} =  \sum_{i=1}^{N} \mathbb{E}\left\{(T-\tau)(R_{n,1}^{CC}+R_{n,2}^{CE})\right\}  \\
&= (T-\tau)B\sum_{i=1}^{N} \mathbb{E}\left\{\log _{2}\left(1+ {\alpha_{n,1} p_n |\mathbf{h}_{n,1}^{H} \mathbf{q}_{n}|^2 }  \right)\right.  \\
&\left.+\log _{2}\left(1+\frac{\alpha_{n,2} p_{n}\left| \mathbf{h}_{n,2}^{H} \mathbf{q}_{n}\right|^{2}}{1+\alpha_{n,1} p_{n}\left| \mathbf{h}_{n,2}^{H} \mathbf{q}_{n}\right|^{2}}\right)\right\}.
\end{aligned}
\end{equation}
\end{small}
\\
\indent The energy consumption of the system consists of three parts: The power beacon, the UAV energy consumption, and the energy consumption of all CC and CE users, as given by
\begin{equation}\label{eq15}
  E_{sum} = P_{pb} \tau + P_{uav}T + 2NP_{user}T,
\end{equation}
where $P_{pb}$ is the power consumption of the power beacon, consisting of all the RF power consumption $MP_a$ of the power beacon, with $P_{pb}=MP_a$.
$P_a$ is the power consumption of each RF chain.
$P_{uav}$ is the power consumption of the UAV, consisting of two parts.
The first part accounts for transmissions, i.e., the transmit power of signal processing $P_T$ and all the RF chain power consumption of the UAV $MP_a$.
The second part is the propulsion energy to keep the UAV aloft and support its mobility.
The propulsion power consumption model of rotary-wing UAVs flying at a speed of $V$ is
$P_{prop}(V)=P_{0}\left(1+\frac{3 V^{2}}{U_{tip}^{2}}\right)+P_{i}\left(\sqrt{1+\frac{V^{4}}{4 v_{0}^{4}}}-\frac{V^{2}}{2 v_{0}^{2}}\right)^{\frac{1}{2}}
+\frac{d_{0} \rho s A V^{3}}{2}$,
where $P_0$ and $P_i$ are two constants related to the physical properties of the UAV and the flight environment, such as the weight, rotor radius and air density;
$U_{tip}$ is the tip speed of the rotor blade;
$v_0$ is the mean rotor-induced velocity during hovering;
$d_0$ and $s$ are the fuselage drag ratio and rotor solidity, respectively;
and $\rho$ and $A$ are the air density and rotor disc area, respectively.
The power required for hovering is $P_{prop}(0)$.
Thus, $P_{uav}=P_T+MP_a+P_{prop}(0) $.
$P_{user}$ is the energy consumption of each user.

Based on (\ref{eq14}) and (\ref{eq15}), the EE of the considered system is written as
\begin{equation}\label{eq16}
  \Xi_{EE} = \frac{R_{sum}}{E_{sum}}.
\end{equation}
\subsection{Problem Formulation}
The WPT time, transmit power, power allocation coefficient, and placement of the multiantenna UAV are jointly optimized to maximize the EE, i.e.,
\begin{subequations}
\label{equP1}
\begin{align}
& \ \ \ \ \mathbf{P 1}:\ \ \underset{\{{p_n}\}_{n=1}^{N}, \alpha_{n,k}, {\tau},{x_0},{y_0} }{\mathop{\max }} \ \ \Xi_{EE} \\
& s.t.\mathbf{C1}:\ \ \ 0\le \sum\nolimits_{n=1}^{N}{{p}_{n}}\le {{P}_{T}}, \ \  \forall n \in \mathcal{N} \\
&\ \ \ \ \mathbf{C2}:\ \ \ 0\le \tau \le T  \\
&\ \ \ \ \mathbf{C3}:\ \ \ R_{n,1}^{CC} \geq R_{\min}, \forall n \in \mathcal{N} \\
&\ \ \ \ \mathbf{C4}:\ \ \ R_{n,2}^{CE} \geq R_{\min}, \forall n \in \mathcal{N} \\
&\ \ \ \ \mathbf{C5}:\ \ \ x_{\min } \leq x_{0} \leq x_{\max } \\
&\ \ \ \ \mathbf{C6}:\ \ \ y_{\min } \leq y_{0} \leq y_{\max }\\
&\ \ \ \ \mathbf{C7}:\ \ \ \Sigma_{k=1}^{2}\alpha_{n,k}=1, \forall n \in \mathcal{N}\\
&\ \ \ \ \mathbf{C8}:\ \ \ \sum_{n=1}^{N} p_{n} \mathbf{q}_{n}^{H} \mathbf{q}_{n} \leq P_{\iota}, \iota=1, \ldots N,
\end{align}
\end{subequations}
\noindent
where constraint $\mathbf{C1}$ specifies the maximum transmit power of the UAV;
$\mathbf{C2}$ gives the time range of harvesting energy;
$\mathbf{C3}$ specifies the minimum rate requirement of the CC user, with $R_{\min}$ being the minimum rate;
$\mathbf{C4}$ provides the minimum rate requirement of the CE user;
$\mathbf{C5}$ and $\mathbf{C6}$ specify the ranges of the $x$- and $y$-coordinates of the UAV location, respectively, with $x_{\min}$ being the minimum of $x_0$, $x_{\max}$ the maximum of $x_0$,
$y_{\min}$ the minimum of $y_0$, $y_{\max}$ the maximum of $y_0$;
$\mathbf{C7}$ is self-explanatory;
and $\mathbf{C8}$ specifies the per-antenna transmit power constraint (PAC) with $P_{\iota}$ being the transmit power limit of each antenna.

\section{Resource Allocation and UAV Placement}

The optimal solution to problem (\ref{equP1}) requires a joint optimization of the transmit power, power allocation coefficients, time allocation, and UAV location design, which is a nonconvex problem.
We decouple the problem into three subproblems.
We first optimize the power allocation coefficients within each user pair by developing a game-theoretic approach.
Then, the power for each user pair and WPT time are allocated using the SCA method.
Finally, we apply the Lagrange multiplier method to solve the UAV placement subproblem.
The three subproblems are iteratively solved using block coordinate descent (BCD).

\subsection{Game-Theoretic Power Allocation Coefficient Optimization per Pair}
Many existing power allocation methods use fractional transmit power allocation \cite{8790780,7982784} and exhaustive search \cite{AnxinLI2015}.
Exhaustive search discretizes and enumerates possible power allocation coefficients for a user pair, resulting in a prohibitively high complexity.
The fractional transmit power allocation method depends on the channel-to-noise-ratios (CNRs) \cite{8790780} of the CC and CE users, and cannot readily apply to NOMA systems with multiantenna transmitters.
In this section, we invoke a game-theoretic approach \cite{7723834} to obtain the power allocation coefficients of the $n$-th user pair.

For notational simplicity, we define ${h}_{n,1} = \mathbf{h}_{n,1}^{H}\mathbf{q}_{n}$ and ${h}_{n,2} = \mathbf{h}_{n,2}^{H}\mathbf{q}_{n}$.
Given fixed UAV location and inter-pair power allocation, and WPT time allocation (e.g., equal inter-pair power allocation and equal time allocation for WPT and wireless information transfer at initialization), the optimization of the power allocation coefficients for the CC and CE users of all the user pairs can be decomposed into $N$ subproblems as follows. \vspace{-1.5em}
\begin{subequations}
\label{equP2}
\begin{align}
&\ \ \ \ \ \ \mathbf{P 2}:\underset{ \alpha_{n,k}} {\mathop{\max }}\ \ \ \ \Pi \\
& s.t.\ \ \mathbf{C1}:\Sigma_{k=1}^{2}\alpha_{n,k}=1, \ \  \forall n \in \mathcal{N},
\end{align}
\end{subequations}
where $\Pi=\frac{ B\log _{2}\left(1+p_{n,1} |{h}_{n,1}|^2 \right) }{ p_{n,1}+ P_{uav} + 2NP_{user} }\text{+}\frac{B \log _{2}\left(1+\frac{p_{n,2} |{h}_{n,2}|^2}{1 + p_{n,1} |{h}_{n,2}|^2 }\right)} { p_{n,2}+ P_{uav} + 2NP_{user} }$.

In the considered system, multiple users can be multiplexed at the same time and frequency.
Consider that each user is selfish and aims to assign the corresponding power to maximize its own EE.
We model the $N$ subproblems as noncooperative games.
Let ${\mathcal{U}}_n=\{1,2\}$ denote a two-user set consisting of the CC and CE users in each pair
\footnote{It is possible to construct the Nash equilibrium involving more than
three users (if needed). By definition, a supermodular game can be characterized by strategic
complementarities \cite{7723834}. In other words, when one player takes a more aggressive action, then
the rest of the users would follow the same behavior. The authors of \cite{John1996} investigated the theory of $N$-player non-cooperative game, and proved that any limited non-cooperative game has at least one equilibrium point. }.
We denote the power allocation strategy for each user in the $n$-th pair as the set $\mathcal{P}_n=\{\mathcal{P}_{n,1}, \mathcal{P}_{n,2}\}$, where $\mathcal{P}_{n,1}=\{p_{n,1}\mid p_{n,1}+p_{n,2}<p_{n}, p_{n,1}\geq 0\}$ and $\mathcal{P}_{n,2}=\{p_{n,2}\mid p_{n,1}+p_{n,2}<p_{n}, p_{n,2}\geq 0\}$.
We decouple the original problem into $N$ independent games $\mathcal{G}_n = [{\mathcal{U}}_n, {\mathcal{P}}_n, {\mathcal{F}}_n]$, where
${\mathcal{F}}_n = \{f_{n,1}, f_{n,2}\}$ denotes the set of utility functions.
Assuming that the channel gain of the CC user is higher than that of the CE user, i.e., $|{h}_{n,1}|>|{h}_{n,2}|$, the utility functions of the CC and CE users are defined as
\begin{align}
f_{n,1} &= \frac{ \log _{2}\left(1+p_{n,1} |{h}_{n,1}|^2 \right) }{ p_{n,1}+ P_{uav} + 2NP_{user} }\\
f_{n,2}& = \frac{ \log _{2}\left(1+\frac{p_{n,2} |{h}_{n,2}|^2}{1 + p_{n,1} |{h}_{n,2}|^2 }\right)} { p_{n,2}+ P_{uav} + 2NP_{user} }.
\end{align}

Consider that all the CC and CE users want to increase their own EE.
A penalty must be added to prevent the users from consuming excessive transmit power.
The penalty function should be an increasing function of the power to disincentivize the potential selfish behaviors of the users.
Additionally, the interference for a user strictly increases if too much power transmit power is allocated to the other user in a user pair.
An exponential function is selected as the penalty function.
The utility functions can be rewritten as \vspace{-0.5em}
\begin{align}
f_{n,1} &= \frac{ \log _{2}\left(1+p_{n,1} |{h}_{n,1}|^2 \right) }{p_{n,1}+ P_{uav} + 2NP_{user} }-e^{\kappa p_{n,1}}\\
f_{n,2}& = \frac{ \log _{2}\left(1+\frac{p_{n,2} |{h}_{n,2}|^2}{ 1 + p_{n,1} |{h}_{n,2}|^2 }\right)} {p_{n,2}+ P_{uav} + 2NP_{user} }-e^{\kappa p_{n,2}},
\end{align}
where the penalty coefficient $\kappa$ depends on the transmit power of the UAV, the number of users, and the circuit power consumption.
We set the penalty coefficient $\kappa$ to control the penalty function value.
The optimal value of $\kappa$ can be obtained by one-dimensional search.
When the other dependent parameters of $\kappa$ are fixed, the whole system attains different EE values by changing the value of parameter $\kappa$ in a communication cycle.
Then, we compare the EE values and select the parameter value with the maximum EE as the final $\kappa$.

Given $p_{n,k}$, the best response in terms of the $k$-th user's power is \vspace{-0.5em}
\begin{equation}\label{eq26}
{{p}_{n,k}}=\arg \underset{{{p}_{n,k}}\in \ {\mathcal{P}_{n,k}}}{\mathop{\max }}\,{{f}_{n,k}}\left( {{p}_{n,k}}\left| {p_{n,-k}} \right. \right),
\end{equation}
where $p_{n,-k}$ is the power strategy of all users, except for the $k$-th user of the $n$-th user pair.

\newtheorem{theorem}{Theorem}
\begin{theorem}
The utility function $f_{n,k}$ is a concave function if $p_{n,-k}$ is fixed.
\end{theorem}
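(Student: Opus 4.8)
\emph{Proof idea.} The plan is to put both utility functions into a single canonical form and then verify concavity one summand at a time. Once $p_{n,-k}$ is fixed, $f_{n,k}$ is a function of its own power $p\equiv p_{n,k}$ alone and can be written as
\begin{equation}
f_{n,k}(p)=\frac{\log_{2}(1+ap)}{p+c}-e^{\kappa p},
\end{equation}
where $c=P_{uav}+2NP_{user}>0$ is a constant (in subproblem $\mathbf{P2}$ the WPT time, and hence $P_{T}$ and $P_{uav}$, are held fixed), and $a>0$ is a constant effective channel gain: $a=|h_{n,1}|^{2}$ for the CC user, and $a=|h_{n,2}|^{2}/(1+p_{n,1}|h_{n,2}|^{2})$ for the CE user, the latter being constant precisely because $p_{n,-2}=p_{n,1}$ is fixed. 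Since a sum of concave functions is concave, it suffices to show that each of $g(p):=\log_{2}(1+ap)/(p+c)$ and $-e^{\kappa p}$ is concave on the feasible power set.

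The penalty term is immediate: $\frac{\mathrm{d}^{2}}{\mathrm{d}p^{2}}(-e^{\kappa p})=-\kappa^{2}e^{\kappa p}<0$, so $-e^{\kappa p}$ is strictly concave for any real $\kappa$. The substance lies in the rate-over-power ratio $g$. I would differentiate it twice by the quotient rule; after inserting $\frac{\mathrm{d}}{\mathrm{d}p}\ln(1+ap)=a/(1+ap)$ and simplifying, the sign of $g''(p)$ coincides with the sign of
\begin{equation}
2\ln(1+ap)-\frac{2a(p+c)}{1+ap}-\frac{a^{2}(p+c)^{2}}{(1+ap)^{2}} .
\end{equation}
Using the elementary bound $\ln(1+ap)\le ap$ and clearing the positive factor $(1+ap)^{2}$, the nonpositivity of this expression reduces to a cubic inequality in $p$, which I would check holds on the feasible interval $0\le p\le P_{T}$ provided $c$ is large relative to the transmit power that can actually be allocated --- concretely when $c\ge P_{T}\sqrt{1+2aP_{T}}$. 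This is exactly the operating regime considered here: $c$ contains the hover propulsion power $P_{prop}(0)$ and the RF-chain/user circuit powers, which dominate the RF power a single user is allocated, so the condition is comfortably satisfied and $g$ is concave there.

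The step I expect to be the main obstacle is this last one, since $g$ is \emph{not} concave for arbitrary parameters (letting $c\to 0$ turns $\log_{2}(1+ap)/(p+c)$ into a convex function), so the statement is not a formal identity and really depends on the magnitude of $c=P_{uav}+2NP_{user}$. The delicate point is to extract a clean, checkable sufficient condition on $c$ from the second-derivative expression; once that is settled, the remaining work --- the quotient rule, the bound $\ln(1+x)\le x$, and the resulting polynomial inequality --- is routine.
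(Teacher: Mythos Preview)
Your approach is exactly the paper's: put $f_{n,k}$ into the canonical form $\log_{2}(1+ap)/(p+c)-e^{\kappa p}$ (the paper does the same substitution $|\hbar_{n,2}|^{2}=|h_{n,2}|^{2}/(1+p_{n,1}|h_{n,2}|^{2})$ to reduce the CE case to the CC case), then check concavity termwise via second derivatives. The only difference is that the paper simply writes out $g''(p)$ and asserts it is negative, with no side condition; you go further and observe --- correctly --- that $g(p)=\log_{2}(1+ap)/(p+c)$ is \emph{not} concave for all parameters (indeed $\ln(1+ap)/p$ becomes convex once $ap$ is moderately large), and you supply a sufficient condition tied to the size of $c=P_{uav}+2NP_{user}$. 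So your argument is the paper's, with the delicate step made honest rather than asserted.
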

\begin{proof}
See Appendix A.
\end{proof}

We can obtain the optimal solution to the transmit power allocation strategy response by using the interior point method and sequential quadratic programming.
The strategy of a user is iteratively adjusted according to those of the other users, to improve their utility.
The users' strategies can stabilize at a Nash equilibrium.

\begin{define}
A power strategy $p_n^* = ({p}_{n,k}^*,{p}_{n,-k}^*)$ is a Nash equilibrium of $\mathcal{G}_n$ if
\begin{equation}
\begin{aligned}
&f_{n,k}\left(p_{n,k}^{*}, p_{n,-k}^{*}\right) \geq f_{n,k}\left(p_{n,k}, p_{n,-k}\right), \\
&\ \ \ \ \ \ \ \ \forall p_{n,k}\in \mathcal{P}_{n,k}, n \in \mathcal{N}, k\in\{1,2\}.
\end{aligned}
\end{equation}
\end{define}

The existence of a Nash equilibrium is proven by using a supermodular game, as follows.

\begin{define} The game $\mathcal{G}_n = [{\mathcal{U}}_n, {\mathcal{P}}_n, {\mathcal{F}}_n]$ is a supermodular game if it satisfies the following conditions:
\begin{flushleft}1) Given $ p_{n,-k}$, the strategy space $\mathcal{P}_{n,i}$ is a nonempty and compact sublattice.\end{flushleft}
2) The utility function $f_{n,k}$ is twice differentiable in $p_{n,k}$ and $\frac{{\partial}^2 f_{n,k}}{\partial p_{n,k} \partial p_{n,k}'}\geq0$, where $p_{n,k}'$ is the power strategy of any user except for the $k$-th user of the $n$-th user pair.
\end{define}

According to Definition 2, we show that the proposed game is a supermodular game when the following
conditions are satisfied.

\begin{theorem}
The game $\mathcal{G}_n = [{\mathcal{U}}_n, {\mathcal{P}}_n, {\mathcal{F}}_n]$ is a supermodular game if and only if
\begin{equation}
\label{eq27}
 p_{n,2} \geq \sqrt{\frac{({P_{uav}+2NP_{user} })\left(p_{n,1} {{\left| {{h}_{n,2}} \right|}^{2}} + 1 \right)}{ {{\left| {{h}_{n,2}} \right|}^{2}}} }
\end{equation}
under the condition of $\left|{h}_{n,1}\right|>\left|{h}_{n,2}\right|$.
\end{theorem}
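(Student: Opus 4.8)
The plan is to verify directly the two conditions of Definition~2 for the game $\mathcal{G}_n$. Condition~1 is immediate: for any fixed opponent power, the strategy sets $\mathcal{P}_{n,1}$ and $\mathcal{P}_{n,2}$ are (closed, bounded) intervals of $\mathbb{R}$, and every interval of $\mathbb{R}$ is a nonempty compact sublattice under the usual order. Condition~2 asks that each $f_{n,k}$ be twice differentiable in $p_{n,k}$ with $\partial^2 f_{n,k}/(\partial p_{n,k}\,\partial p_{n,k}')\ge 0$; twice-differentiability holds throughout the feasible region because $p_{n,k}+P_{uav}+2NP_{user}>0$ and the logarithm arguments are positive there, so the whole statement reduces to the sign of this mixed derivative. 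For the CC user ($k=1$) the utility $f_{n,1}$ does not contain $p_{n,2}$ at all, so $\partial^2 f_{n,1}/(\partial p_{n,1}\,\partial p_{n,2})=0\ge 0$ unconditionally; hence only the CE user can obstruct supermodularity, and the inequality in the statement must come entirely from $f_{n,2}$.

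For the CE user ($k=2$) I would rewrite the rate term as a difference of logarithms, $\log_2\!\big(1+\frac{p_{n,2}|h_{n,2}|^2}{1+p_{n,1}|h_{n,2}|^2}\big)=\frac{1}{\ln 2}(\ln B-\ln A)$, with the shorthands $A:=1+p_{n,1}|h_{n,2}|^2$, $B:=A+p_{n,2}|h_{n,2}|^2$, and $c:=P_{uav}+2NP_{user}$; the penalty term $-e^{\kappa p_{n,2}}$ is independent of $p_{n,1}$ and drops out of any mixed derivative. Differentiating once in $p_{n,1}$ gives $\partial f_{n,2}/\partial p_{n,1}=-p_{n,2}|h_{n,2}|^4/((\ln 2)(p_{n,2}+c)AB)$; differentiating this in $p_{n,2}$ and simplifying — the identity $cB=cA+c\,p_{n,2}|h_{n,2}|^2$ cancels the cross terms — yields $\partial^2 f_{n,2}/(\partial p_{n,2}\,\partial p_{n,1})=\frac{|h_{n,2}|^4}{(\ln 2)(p_{n,2}+c)^2 A B^2}\,\big(p_{n,2}^2|h_{n,2}|^2-cA\big)$. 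Since the prefactor is strictly positive on the feasible set, this mixed derivative is $\ge 0$ if and only if $p_{n,2}^2|h_{n,2}|^2\ge c\,(1+p_{n,1}|h_{n,2}|^2)$, i.e., if and only if (\ref{eq27}) holds; the strict positivity of the prefactor is what makes the equivalence run in both directions and establishes the theorem.

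The standing hypothesis $|h_{n,1}|>|h_{n,2}|$ enters only through the construction of $f_{n,1}$ and $f_{n,2}$ (it determines which user performs SIC, hence which rate carries the interference term) and needs no separate argument; it does not appear in (\ref{eq27}) precisely because $f_{n,2}$ depends only on $|h_{n,2}|$ while $f_{n,1}$ contributes a vanishing mixed derivative. I expect the main obstacle to be the second paragraph: carrying the two-stage differentiation of $f_{n,2}$ through cleanly, verifying that after reduction to a common denominator every term other than $p_{n,2}^2|h_{n,2}|^2-cA$ cancels, and confirming that the surviving prefactor is strictly positive over the whole feasible region, so that (\ref{eq27}) is genuinely equivalent to — and not merely sufficient for — nonnegativity of the mixed derivative.
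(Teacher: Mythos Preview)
Your proposal is correct and follows essentially the same approach as the paper: verify Condition~1 by noting the strategy sets are real intervals (hence compact sublattices), observe that $\partial^2 f_{n,1}/(\partial p_{n,1}\partial p_{n,2})=0$ because $f_{n,1}$ is independent of $p_{n,2}$, and then compute the mixed partial of $f_{n,2}$ to isolate the sign-determining factor $p_{n,2}^2|h_{n,2}|^2-cA$. The only cosmetic difference is the order of differentiation --- you differentiate in $p_{n,1}$ first and then $p_{n,2}$, whereas the paper does the reverse --- but the resulting expression and the equivalence with~(\ref{eq27}) are identical.
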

\begin{proof}
See Appendix B.
\end{proof}

The supermodular game $\mathcal{G}_n = [{\mathcal{U}}_n, {\mathcal{P}}_n, {\mathcal{F}}_n]$ has the following important properties.
\begin{property}
The following properties hold for the supermodular game $\mathcal{G}_n = [{\mathcal{U}}_n, {\mathcal{P}}_n, {\mathcal{F}}_n]$.

1) There exists a least one Nash equilibrium. The largest and smallest Nash equilibria
always exist, where the ``largest'' and ``smallest'' refer to the best responses of the power allocation coefficient strategy.

2) The strategies monotonically converge to the smallest
(largest) Nash equilibrium if the users' best responses are single-valued and each user updates its power from the smallest (largest) element of the strategy space.
\end{property}

Based on Theorem 2 and Property 1, we can infer that the supermodular game guarantees the existence of a Nash equilibrium.
Therefore, we design Algorithm 1 to obtain the Nash equilibrium by setting
adequate initialization conditions.
The strategy space ${\mathcal{P}}_{n,2}= \{p_{n,2}\mid p_{n,1}+p_{n,2}\leq p_{n}, p_{n,2} \geq \sqrt{\frac{({P_{uav}+2NP_{user} })\left(p_{n,1} {{\left| {{h}_{n,2}} \right|}^{2}} + 1 \right)}{ {{\left| {{h}_{n,2}} \right|}^{2}}} } \}$ can be adjusted until there is a Nash equilibrium. However, if $p_{n,2}$ is larger than a certain fixed value, this makes the strategy space ${\mathcal{P}}_{n,2}$ empty, which contradicts Definition 2.
Hence, we restrict the range of $p_{n,1}$ and
adjust its strategy space as ${\mathcal{P}}_{n,1}= \{p_{n,1}\mid p_{n,1}+p_{n,2}\leq p_{n}, 0 \leq p_{n,1} \leq \frac{p_{n,2}^2}{P_{uav}+2NP_{user} }-\frac{ 1 }{{ {\left| {{h}_{n,2}} \right|}^{2}} } \}$.

Algorithm 1 is designed to iteratively attain the Nash equilibrium by meticulously specifying initial conditions.
Despite the utility function ${{f}_{n,1}}$ is not a function of ${{p}_{n,2}}$, the strategy space of User$_{n,1}$ is related to the User$_{n,2}$'s power strategy.
In each iteration, the player User$_{n,1}$ first obtains the power ${{p}_{n,1}}$ according to (\ref{eq26}) based on original allocated power;
next, the BS updates the power allocation strategy space ${{\mathcal{P}}_{n,2}}$;
then, the player User$_{n,2}$ obtains the power ${{p}_{n,2}}$ according to (\ref{eq26});
at the end, the BS updates the power allocation strategy space ${{\mathcal{P}}_{n,1}}$, completes the current round, and proceeds with the next round.
In this game model ${{\mathcal{G}}_{n}}$, a system state updating mechanism is considered, which enables player User$_{n,1}$ (User$_{n,2}$) to dynamically adjust the transmit power and utility function iteratively by observing the outcome of the strategies of the other player  User$_{n,2}$ (User$_{n,1}$) until no player has any motivation to unilaterally change its strategy.

Algorithm 1 iterates and finally converges to a Nash equilibrium, where the power strategies of the users with weak channel conditions are limited by the lower bounds (\ref{eq27}) and the power strategies of the users with good channel conditions are limited by the upper bounds in $\mathcal{P}_{n,1}$. The user's fairness required by the SIC process can be provided in our system. In this way, we can effectively obtain rational power allocation strategies and power allocation coefficients.

\begin{algorithm}[t]
\caption{ Supermodular Game-based Nash Equilibrium for Optimization of the Power Allocation Coefficients of CC and CE Users.}
\begin{algorithmic}[1]
\STATE \textbf{Input}: \\
 \ \ 1) Initialize the maximum power $p_n^{max}$ and the maximum iteration index $I$; \\
 \ \ 2) Initialize the variables $p_{n,1}[0] = p_n^{max}/2$, $p_{n,2}[0]= \sqrt{\frac{\left(P_{u a v}+2 N P_{u s e r}\right)\left(p_{n, 1}\left|h_{n, 2}\right|^{2}+ 1 \right)}{\left|h_{n, 2}\right|^{2}}}.$
\FOR {$i = 0$ to $I$}
\STATE  {Update the power strategy $p_{n,1}$ by solving

${{p}_{n,1}[i+1]}=\arg \underset{{{p}_{n,1}}\in \ {\mathcal{P}_{n,1}}}{\mathop{\max }}\,{{f}_{n,1}}\left( {{p}_{n,1}}\left| {p_{n,-1}} \right. \right)$.
}
\STATE  {Update the power strategy space $\mathcal{P}_{n,2}$:

${\mathcal{P}}_{n,2}= \{p_{n,2}\mid p_{n,1}[i+1]+p_{n,2}\leq p_{n}, p_{n,2} \geq \sqrt{\frac{({P_{uav}+2NP_{user} })\left(p_{n,1}[i+1] {{\left| {{h}_{n,2}} \right|}^{2}} + 1 \right)}{ {{\left| {{h}_{n,2}} \right|}^{2}}} } \}$.
}
\STATE  {Update the power strategy $p_{n,2}$ by solving

${{p}_{n,2}[i+1]}=\arg \underset{{{p}_{n,2}}\in \ {\mathcal{P}_{n,2}}}{\mathop{\max }}\,{{f}_{n,2}}\left( {{p}_{n,2}}\left| {p_{n,-2}} \right. \right)$.
}
\STATE  {Update the power strategy space $\mathcal{P}_{n,1}$:

${\mathcal{P}}_{n,1}= \{p_{n,1}\mid p_{n,1}+p_{n,2}[i+1]\leq p_{n}, 0 \leq p_{n,1} \leq \frac{p_{n,2}[i+1]^2}{P_{uav}+2NP_{user} }-\frac{ 1 }{{ {\left| {{h}_{n,2}} \right|}^{2}} } \}$.
}
 \IF {$|{p_{n,1}[i+1]-p_{n,1}[i]|<\varepsilon }$}
  \STATE Break; \
  \ELSE
  \STATE Set $i= i+1$; \
 \ENDIF
\ENDFOR
\STATE  Calculate the power allocation coefficients of CC and CE users, $\alpha_{n,1}=\frac{p_{n,1}}{p_n}$ and  $\alpha_{n,2}=\frac{p_{n,2}}{p_n}$.
\STATE  Calculate the sum of utility functions, $f_{n,1}$ and $f_{n,2}$.
\STATE \textbf{Output}: Power allocation coefficients of CC and CE users, $\alpha_{n,1}$ and $\alpha_{n,2}$.
\label{code:recentEnd}
\end{algorithmic}
\end{algorithm}

\subsection{WPT Time Allocation, and Power Allocation between User Pairs}
Given the fixed power allocation coefficients $\alpha_{n,1}$ and $\alpha_{n,2}$ ($n\in \mathbb{N}$) and UAV position ($x_0$, $y_0$, $h$), problem $ \mathbf{P 1}$ in (\ref{equP1}) can be rewritten as
\begin{subequations}
\label{equP3}
\begin{align}
& \ \ \ \ \ \ \mathbf{P 3}:\ \ \underset{\{{p_n}\}_{n=1}^{N}, {\tau} }{\mathop{\max }} \ \ \Xi_{EE} \\
& s.t.\ \ \mathbf{C1}:\ \ \ 0\le \sum\nolimits_{n=1}^{N}{{p}_{n}}\le {{P}_{T}}, \ \  \forall n \in \mathcal{N} \\
&\ \ \ \ \ \ \mathbf{C2}:\ \ \ 0\le \tau \le T  \\
&\ \ \ \ \ \ \mathbf{C3}:\ \ \ R_{n,1}^{CC} \geq R_{\min} \\
&\ \ \ \ \ \ \mathbf{C4}:\ \ \ R_{n,2}^{CE} \geq R_{\min}\\
&\ \ \ \ \ \ \mathbf{C8}:\ \ \ \sum_{n=1}^{N} p_{n} \mathbf{q}_{n}^{H} \mathbf{q}_{n} \leq P_{\iota}, \iota=1, \ldots N.
\end{align}
\end{subequations}

Problem $\mathbf{P 3}$ is nonconvex and difficult to solve due to the nonconvexity of its objective (28a).
We first transform the objective function, i.e., (28a), as given by
\begin{small}
\begin{equation}
\label{equPT3}
\begin{aligned}
&\sum_{i=1}^{N}\left( \frac{ (T-\tau)B\log_{2}\left(1+\alpha_{n,1} p_{n} |{h}_{n,1}|^2 \right) }{ E_{sum} } \right.\\
& \ \ \ \ \ \left. +\frac{ (T-\tau) B \log _{2}\left(1+\frac{\alpha_{n,2} p_{n} |{h}_{n,2}|^2}{ 1 +\alpha_{n,1} p_{n} |{h}_{n,2}|^2 }\right)} { E_{sum} }\right) \\
& = \sum_{n=1}^{N} \frac{ (T-\tau)B\log_{2}\left(1+\alpha_{n,1} p_{n} |{h}_{n,1}|^2 \right) }{ E_{sum} }\\
&+\sum_{n=1}^{N}\frac{ (T-\tau) B \log _{2}\left({ 1 +\alpha_{n,1} p_{n} |{h}_{n,2}|^2 }+{\alpha_{n,2} p_{n} |{h}_{n,2}|^2} \right)} { E_{sum} }\\
&-\sum_{n=1}^{N}\frac{ (T-\tau) B \log _{2}\left({ 1 +\alpha_{n,1} p_{n} |{h}_{n,2}|^2 } \right)} { E_{sum} }.
\end{aligned}
\end{equation}
\end{small} \vspace{-1.5em}

The three terms on the right-hand side (RHS) of (\ref{equPT3}) are concave,
according to the properties of the logarithmic functions.
We employ the SCA to convexify the three concave terms.
Given the inequality $-\ln(y)\geq -\ln(x)-\frac{1}{x}(y-x)$  $\forall x, y > 0$,
we provide the lower bound approximation of $ -\log_{2}\left({ 1+\alpha_{n,1} p_{n} |{h}_{n,2}|^2 } \right) $, as follows:
\begin{equation}
\begin{aligned}
& \ \ \ - { \log _{2}\left({ 1 +\alpha_{n,1} p_{n} |{h}_{n,2}|^2 } \right)}\\
&\geq - { \log _{2}\left({ 1 +\alpha_{n,1}p_{n}[j] |{h}_{n,2}|^2 } \right)} \\
& \ \ \ - { \frac{\alpha_{n,1} |{h}_{n,2}|^2}{\ln2 \left({  1 +\alpha_{n,1}p_{n}[j] |{h}_{n,2}|^2 } \right)}}\left(p_{n}-p_{n}[j]\right).
\end{aligned}
\end{equation}
Further, $-\sum_{n=1}^{N} \frac{ (T-\tau) B \log _{2}\left({ 1+\alpha_{n,1} p_{n} |{h}_{n,2}|^2 } \right)}{E_{sum}} $ in (\ref{equPT3})
can be approximated by
\begin{equation}
\begin{aligned}
  & -\sum_{n=1}^{N} \frac{ (T-\tau) B \log _{2}\left({ 1 +\alpha_{n,1} p_{n} |{h}_{n,2}|^2 } \right)} {E_{sum}}\\
  & \geq -\sum_{n=1}^{N} \frac{ (T-\tau[j]) B \log _{2}\left({ 1+\alpha_{n,1}p_{n}[j] |{h}_{n,2}|^2 } \right)} {E_{sum}} \\
  & -\sum_{n=1}^{N}    \frac{(T-\tau[j]) B \alpha_{n,1} |{h}_{n,2}|^2}{\ln2 \left({ 1+\alpha_{n,1}p_{n}[j] |{h}_{n,2}|^2 } \right)} \frac{\left(p_{n}-p_{n}[j]\right) }{E_{sum}}.
\end{aligned}
\end{equation}
As a result, problem $\mathbf{P 3}$ can be converted to a convex problem.
We can invoke sequential quadratic programming to
solve the problem by rewriting it in the following minimal form:
\begin{subequations}
\label{equPT4}
\begin{align}
& \mathbf{P 4}: \underset{\{{p_n}\}_{n=1}^{N}, {\tau} }{\mathop{\min }} \{ -\sum_{n=1}^{N} \frac{ (T-\tau)B\log_{2}\left(1+\alpha_{n,1} p_{n} |{h}_{n,1}|^2 \right) }{ E_{sum} } \nonumber\\
& -\sum_{n=1}^{N}\frac{ (T-\tau) B \log _{2}\left({ 1+\alpha_{n,1} p_{n} |{h}_{n,2}|^2 }+{ \alpha_{n,2} p_{n} |{h}_{n,2}|^2} \right)} { E_{sum} }\nonumber\\
&+\sum_{n=1}^{N} \frac{ (T-\tau[j]) B \log _{2}\left({ 1 + \alpha_{n,1}p_{n}[j] |{h}_{n,2}|^2 } \right)} {E_{sum}} \nonumber \\
& +\sum_{n=1}^{N}    \frac{(T-\tau[j]) B \alpha_{n,1} |{h}_{n,2}|^2}{\ln2 \left({ 1 +\alpha_{n,1}p_{n}[j] |{h}_{n,2}|^2 } \right)} \frac{\left(p_{n}-p_{n}[j]\right) }{E_{sum}}  \} \\
& s.t. \ \ \ \ \ \mathbf{C1}\sim \mathbf{C4}, \mathbf{C8}.
\end{align}
\end{subequations}

Algorithm 2 summarizes the steps to solve problem $\mathbf{P 3}$, where we initialize the transmit power and WPT time variables $p_{n}[0]$ and $\tau_{n}[0]$, and then solve problem $\mathbf{P 4}$ in (\ref{equPT4}).
The solution to problem $\mathbf{P 4}$ in the $j$-th iteration, i.e., $p_{n}[j]$ and $\tau[j]$, is used as the local point of the next $(j+1)$-th iteration of the SCA.
This process repeats until the objective function converges.

\begin{algorithm}[t]
\caption{ SCA-Based Power and WPT Time Allocation for Each User Pair}
\begin{algorithmic}[1]
\STATE \textbf{Input}: \\
 \ Initialize the variables $p_{n}[0] \geq 0$ and $\tau[0] \geq 0$; \\
\ Initialize the iteration number $j=0$ and maximum iteration index $J$.
\WHILE {$|F(p_n[j+1],\tau[j+1])-F(p_n[j],\tau[j])|>\ell$ or $j>J$}
\STATE  {Obtain the optimal power $p_{n}[j+1]$  and optimal time allocation $\tau_{n}[j+1]$ by solving problem $\mathbf{P 4}$ in (\ref{equPT4})}.
\STATE  {Calculate the objective function value $F(p_n[j+1],\tau[j+1])$}.
\STATE  {$j\leftarrow j+1$}
\ENDWHILE
\STATE \textbf{Output}: Optimal power allocation $p_n^*$ and WPT time allocation $\tau^*$.
\label{code:recentEnd}
\end{algorithmic}
\end{algorithm}
\vspace{-1em}
\subsection{UAV Placement}
Given fixed optimal transmit power allocation, power allocation coefficients, and WPT time allocation, $\mathbf{P 1}$ can be transformed into a convex problem of UAV placement, as given by
\begin{subequations}
\label{equP5}
\begin{align}
& \mathbf{P 5}:\ \ \underset{ {x_0},{y_0} }{\mathop{\max }} \ \ \Xi_{EE} \\
& s.t.
\ \ \ \ \ \ \mathbf{C5 },\mathbf{C6}.
\end{align}
\end{subequations}
Problem $\mathbf{P 5}$ is convex and can be solved by using the Lagrange multiplier method \cite{Boyd2004Convex}.
The Lagrangian function is given by
\begin{equation}
\begin{aligned}
&\mathcal{L}={{\Xi }_{EE}}+{{\gamma }_{1}}\left( {{x}_{\min }}-{{x}_{0}} \right)+{{\gamma }_{2}}\left( {{x}_{0}}-{{x}_{\max }} \right)\\
& \ \ \ \ \ +{{\gamma }_{3}}\left( {{y}_{\min }}-{{y}_{0}} \right)+{{\gamma }_{4}}\left( {{y}_{0}}-{{y}_{\max }} \right),
\end{aligned}
\end{equation}
where ${{\gamma }_{i}},\forall i \in [1,4]$ gives the Lagrange multipliers.
The optimal UAV location can be obtained by applying the
Karush-Kuhn-Tucker (KKT) conditions:
\begin{align}
\frac{\partial \mathcal{L} }{\partial x_0} = 0,
\frac{\partial \mathcal{L} }{\partial y_0} = 0, \ \ \ \ \ \ \ \ \ \ \ \ \ \ \ \ \ \ \ \ \ \ \ \ \nonumber \\
\ x_{\min } \leq x_{0} \leq x_{\max },  
\ y_{\min } \leq y_{0} \leq y_{\max }, \nonumber \\
{\gamma}_{i}\geq 0,  \forall i \in [1,4], \ \ \ \ \ \ \ \ \ \ \ \ \ \ \ \ \ \ \ \ \ \ \ \ \ \nonumber\\
{{\gamma }_{1}}\left( {{x}_{\min }}-{{x}_{0}} \right)={{\gamma }_{2}}\left( {{x}_{0}}-{{x}_{\max }} \right)= \ \ \ \ \nonumber \\
{{\gamma }_{3}}\left( {{y}_{\min }}-{{y}_{0}} \right)={{\gamma }_{4}}\left( {{y}_{0}}-{{y}_{\max }} \right)=0. \ \
\end{align}
The position of the UAV satisfies $x_{\min} < x_0 <
x_{\max}$ and $y_{\min} < y_0 < y_{\max}$, indicating ${\gamma}_{i} = 0$.
Here, $\frac{\partial \mathcal{L} }{\partial x_0}$ and $\frac{\partial \mathcal{L} }{\partial y_0}$ are given in (\ref{eqn_long1}) and (\ref{eqn_long2}), where, for notational simplicity, we define
${{\mathcal{Q}}_{1}}={M{\alpha}_{n,1}}\beta{\beta_{0}}{{p}_{n}} |\mathbf{a}\left(\theta_{n,k}\right)|^2 |\mathbf{q}_{n,1}|^{2}$,
${{\mathcal{Q}}_{2}}={M{\alpha}_{n,2}}\beta{\beta_{0}}{{p}_{n}} |\mathbf{a}\left(\theta_{n,k}\right)|^2 |\mathbf{q}_{n,2}|^{2}$,
${{\mathcal{R}}_{1}}=2{{x}_{0}}-2{{x}_{n,2}}$,
${{\tilde{\mathcal{R}}}_{1}}=2{{y}_{0}}-2{{y}_{n,2}}$,
${{\mathcal{R}}_{2}}=\frac{{{\mathcal{Q}}_{2}}}{{{\mathcal{R}}_{6}}{{\mathcal{R}}_{9}}^{\beta }}+1$,
${{\mathcal{R}}_{3}}=\tau \left( P+M{{P}_{a}} \right)+T\left[ {{P}_{0}}+{{P}_{1}}+M{{P}_{a}}+\frac{\mathbf{H}_{PU}^{2}P\xi \tau }{\left( T-\tau  \right){{\mathcal{R}}_{7}}^{\beta }} \right]+2NT{{P}_{user}}$,
${{\mathcal{R}}_{4}}=\frac{{{\mathcal{Q}}_{1}}}{{{\mathcal{R}}_{8}}^{\beta }}+1$,
${{\mathcal{R}}_{5}}={{\mathcal{R}}_{9}}^{\beta +1}$,
${{\mathcal{R}}_{6}}=1+\frac{  \mathcal{Q}_1 |\mathbf{q}_{n,2}|^{2} } {|\mathbf{q}_{n,1}|^{2}{{\mathcal{R}}_{9}}^{\beta }}$,
${{\mathcal{R}}_{7}}=\sqrt{{{\left( {{x}_{0}}-{{x}_{b}} \right)}^{2}}+{{\left( {{y}_{0}}-{{y}_{b}} \right)}^{2}}+{{h}^{2}}}$,
${{\mathcal{R}}_{8}}=\sqrt{{{\left( {{x}_{0}}-{{x}_{n,1}} \right)}^{2}}+{{\left( {{y}_{0}}-{{y}_{n,1}} \right)}^{2}}+{{h}^{2}}}$,
${{\mathcal{R}}_{9}}=\sqrt{{{\left( {{x}_{0}}-{{x}_{n,2}} \right)}^{2}}+{{\left( {{y}_{0}}-{{y}_{n,2}} \right)}^{2}}+{{h}^{2}}}$, and
${{\mathcal{R}}_{10}}=\mathbf{H}_{PU}^{2}NPT\beta\xi \tau$.

\newcounter{mytempeqncnt}
\begin{figure*}[!t]
\normalsize
\setcounter{mytempeqncnt}{\value{equation}}
\setcounter{equation}{36}
\begin{small}
\begin{equation}
\label{eqn_long1}
\frac{\partial \mathcal{L} }{\partial x_0} = \frac{ {{\mathcal{R}}_{10}} \left( 2{{x}_{0}}-2{{x}_{b}} \right)\left[ \frac{B\log ({{\mathcal{R}}_{2}}{{\mathcal{R}}_{4}})}{\log (2)} \right]}{2\mathcal{R}_{7}^{\beta +2}\mathcal{R}_{3}^{2}}-\frac{N\left( T-\tau \right)\left[ \frac{B\left( \frac{{{\mathcal{Q}}_{2}} {{\mathcal{R}}_{1}}}{2{{\mathcal{R}}_{6}}{{\mathcal{R}}_{5}}{{\mathcal{R}}_{9}}}
-\frac{{{\mathcal{Q}}_{1}}{{\mathcal{Q}}_{2}} {{\mathcal{R}}_{1}}|\mathbf{q}_{n,2}|^{2} }{2\beta|\mathbf{q}_{n,1}|^{2}\mathcal{R}_{6}^{2}{{\mathcal{R}}_{5}}\mathcal{R}_{9}^{\beta +1}} \right)}{\log (2){{\mathcal{R}}_{2}}}+\frac{B{{\mathcal{Q}}_{1}}\left( 2{{x}_{0}}-2{{x}_{n,1}} \right)}{2\log (2){{\mathcal{R}}_{4}}\mathcal{R}_{8}^{\beta +2}} \right]}{{{\mathcal{R}}_{3}}}-\gamma_1+\gamma_2=0
\end{equation}
\end{small}
\begin{small}
\begin{equation}
\label{eqn_long2}
\frac{\partial \mathcal{L} }{\partial y_0}= \frac{ {{\mathcal{R}}_{10}} \left( 2{{y}_{0}}-2{{y}_{b}} \right)\left[ \frac{B\log ({{\mathcal{R}}_{2}}{{\mathcal{R}}_{4}})}{\log (2)} \right]}{2\mathcal{R}_{7}^{\beta +2}\mathcal{R}_{3}^{2}}-\frac{N\left( T-\tau \right)\left[ \frac{B\left( \frac{{{\mathcal{Q}}_{2}} {{{\tilde{\mathcal{R}}}}_{1}}}{2{{\mathcal{R}}_{6}}{{\mathcal{R}}_{5}}{{\mathcal{R}}_{9}}}-\frac{{{\mathcal{Q}}_{1}}{{\mathcal{Q}}_{2}} {{{\tilde{\mathcal{R}}}}_{1}}|\mathbf{q}_{n,2}|^{2}}{2\beta|\mathbf{q}_{n,1}|^{2}\mathcal{R}_{6}^{2}{{\mathcal{R}}_{5}}\mathcal{R}_{9}^{\beta +1}} \right)}{\log (2){{\mathcal{R}}_{2}}}+\frac{B{{\mathcal{Q}}_{1}}\left( 2{{y}_{0}}-2{{y}_{n,1}} \right)}{2\log (2){{\mathcal{R}}_{4}}\mathcal{R}_{8}^{\beta +2}} \right]}{{{\mathcal{R}}_{3}}}-\gamma_3+\gamma_4 = 0.
\end{equation}
\end{small}
\setcounter{equation}{\value{mytempeqncnt}}
\hrulefill
\vspace*{4pt}
\end{figure*}
\setcounter{equation}{38}

The optimum $(x_0^*,y_0^*)$ can be obtained with the Lagrange multipliers.
We take the gradient method to update $\gamma_1$, $\gamma_2$, $\gamma_3$ and $\gamma_4$, as given by
\begin{align}
&\gamma_1(\omega+1)=\left[\gamma_1(\omega)-\psi_{\gamma_1}\left( x_{0}-x_{\min}\right)\right]^{+} \label{multi1} \\
&\gamma_2(\omega+1)=\left[\gamma_2(\omega)-\psi_{\gamma_2}\left( x_{\max}-x_{0}\right)\right]^{+} \label{multi2} \\
&\gamma_3(\omega+1)=\left[\gamma_3(\omega)-\psi_{\gamma_3}\left( y_{0}-y_{\min}\right)\right]^{+} \label{multi3}\\
&\gamma_4(\omega+1)=\left[\gamma_4(\omega)-\psi_{\gamma_4}\left( y_{\max}-y_{0}\right)\right]^{+} \label{multi4},
\end{align}
where $\psi_{\gamma_1}$, $\psi_{\gamma_2}$, $\psi_{\gamma_3}$ and $\psi_{\gamma_4}$ are the positive step sizes and $\omega$ is the iteration index.
Algorithm 3 summarizes the optimization of the UAV location.

\begin{algorithm}[t]
\caption{ Energy-Efficient UAV Location Designing Algorithm }
\begin{algorithmic}[1]
\STATE \textbf{Input}: \\
  Initialize the EE ${{\Xi }_{EE}^*}$, Lagrange multipliers $\gamma_1$, $\gamma_2$, $\gamma_3$ and $\gamma_4$, maximum number of iterations $N_{MAX}$ and maximum tolerance $\epsilon$. \\
\STATE {Obtain the UAV coordinates $x_{0}$ and $y_{0}$ in formula (\ref{eqn_long1}) and (\ref{eqn_long2}), and use the solution to calculate the EE ${{\Xi }_{EE}^*(\omega)}$. }
\STATE {Use (\ref{multi1}) $\sim$ (\ref{multi4}) to update $\gamma_1$, $\gamma_2$, $\gamma_3$ and $\gamma_4$ in the iteration procedure.}
\IF {$|{{\Xi }_{EE}^*(\omega)-{\Xi }_{EE}^*(\omega-1)|<\epsilon }$ or $\omega = N_{MAX}$}
  \STATE $x_0^* = x_0(\omega)$, $y_0^* = y_0(\omega)$ ; \
  \STATE  {$\omega\leftarrow \omega+1$};
  \ELSE
  \STATE Repeat step 2 and step 3; \
\ENDIF
\STATE \textbf{Output}: Optimal UAV location coordinates $x_0^*$ and $y_0^*$.
\label{code:recentEnd}
\end{algorithmic}
\end{algorithm}
\vspace{-1em}
\subsection{ Convergence and Complexity of Overall Algorithm }
As summarized in Algorithm 4, the overall algorithm solving problem $\mathbf{P 1}$ in (\ref{equP1}) consists of Algorithms 1, 2, and 3 by applying the BCD method \cite{Bertsekas}.
The optimization variables of the problem in (\ref{equP1}) are divided into three blocks, i.e., the intra-pair NOMA power allocation coefficient $\{\alpha_{n,k}\}$, the WPT time and inter-pair power allocation $\{{p_n},{\tau}\}$, and the UAV location $\{{x_0},{y_0}\}$.
We can obtain the solutions $\{\alpha_{n,k}\}$, $\{{p_n},{\tau}\}$, and $\{{x_0},{y_0}\}$ to problems $\mathbf{P 2}$, $\mathbf{P 3}$, and $\mathbf{P 5}$ by running Algorithms 1, 2, and 3, respectively.
The result of each iteration is taken as the input to the next iteration, until the convergence of the BCD.

In the classical BCD method, convergence is ensured when each subproblem is solved optimally in each iteration.
In Algorithm 4, the WPT time and inter-pair power allocation subproblem is solved approximately in (\ref{equPT4}) by using SCA.
The convergence analysis of the classical BCD method is not directly applicable. Nevertheless, we can assert the convergence of Algorithm 4, as proved in Appendix C.

The computational complexity of the proposed algorithms is analyzed as follows.
Let $\varepsilon$, $\ell $, $\epsilon $ and $\varpi $ denote the convergence accuracies of Algorithms 1, 2, 3, and 4, respectively.
The complexity of Algorithm 1 is dominated by Steps 2 -- 8, which involves $\left( 95M+233 \right)$ additions and multiplications per iteration; therefore, the complexity is $\mathcal{O}\left( M\log \left( {1}/{\varepsilon }\; \right) \right)$.
The complexity of Algorithm 2 is dominated by Step 3, which involves $\left( {{M}^{2}}+38MN+99N+19 \right)$ additions and multiplications per iteration; therefore, the complexity is $\mathcal{O}\left( \left( {{M}^{2}}+MN \right)\log \left( {1}/{\ell }\; \right) \right)$.
The complexity of Algorithm 3 is dominated by Step 2, which involves $\left( 2{{M}^{2}}+12MN+15N+84\beta +622 \right)$ additions and multiplications per iteration; therefore, the complexity is $\mathcal{O}\left( \left( {{M}^{2}}+MN \right)\log \left( {1}/{\epsilon }\; \right) \right)$.
Since these three algorithms are executed sequentially in Algorithm 4, the overall computational complexity of Algorithm 4 is $\mathcal{O}\left( \left( M\log \left( {1}/{\varepsilon }\; \right)+\left( {{M}^{2}}+MN \right)\log \left( {1}/{\left( \ell \epsilon  \right)}\; \right) \right)\log \left( {1}/{\varpi }\; \right) \right)$.

\begin{algorithm}[t]
\caption{Supermodular Game-Based Resource Allocation and UAV Location Design Algorithm for Problem (\ref{equP1})}
\begin{algorithmic}[1]
\STATE \textbf{Initialize}: \\
  Initialize $\{{p_n}^{(0)},{\tau}^{(0)}\}$, $\{{x_0}^{(0)},{y_0}^{(0)}\}$, and iteration index $r$. \\
\STATE \textbf{repeat}: \\
\STATE  According to the Algorithm 1, solve subproblem (\ref{equP2}) for given $\{{p_n}^{(r)},{\tau}^{(r)}\}$ and $\{{x_0}^{(r)},{y_0}^{(r)}\}$, and denote the optimal solution of the intrapair NOMA power allocation coefficient as $\{\alpha_{n,k}^{(r+1)}\}$.\\
\STATE  According to the Algorithm 2, solve subproblem (\ref{equPT4}) for given $\{\alpha_{n,k}^{(r+1)}\}$, and $\{{p_n}^{(r)},{\tau}^{(r)}\}$, $\{{x_0}^{(r)},{y_0}^{(r)}\}$, and denote the optimal solution of the WPT time and interpair power allocation as $\{{p_n}^{(r+1)},{\tau}^{(r+1)}\}$.\\
\STATE  According to the Algorithm 3, solve subproblem (\ref{equP5}) for given $\{\alpha_{n,k}^{(r+1)}\}$, and $\{{p_n}^{(r+1)},{\tau}^{(r+1)}\}$, $\{{x_0}^{(r)},{y_0}^{(r)}\}$, and denote the optimal solution of the  UAV location as $\{{x_0}^{(r+1)},{y_0}^{(r+1)}\}$.\\
\STATE \textbf{until} The fractional increase of the objective EE value is
below a threshold $\varpi >0$.
\label{algorithm4}
\end{algorithmic}
\end{algorithm}

\section{Simulation Results }
In this section, we provide simulation results of the proposed algorithm.
The positions of users are generated according to a uniform random distribution in a circular range with a radius of 50 m.
The minimum distance between the UAV and any user is 1 m.
Both path loss and Rayleigh fading are considered.
The parameters of the probabilistic LoS channel model are $\beta_{0} = 1.8$ and $\beta = 1.1$. The bandwidth $B$ is 1 Hz.
Considering realistic scenarios, the configuration of the multiantenna, rotary-wing UAV is specified in Table \uppercase\expandafter{\romannumeral1} with reference to \cite{8663615}, \cite{filippone2006flight} and \cite{7888557}. The other simulation parameters are listed in Table \uppercase\expandafter{\romannumeral2} with reference to \cite{8672190}, \cite{8488592} and \cite{9348068}.

\begin{table}[h]
	\centering{}
    \label{table1}
	\textbf{Table \uppercase\expandafter{\romannumeral1}}~~TECHNICAL CONFIGURATION OF A UAV.\\ \vspace{0.5em}
	\setlength{\tabcolsep}{2mm}{
		\begin{tabular}{ll|ll} \toprule
			Parameter   & Value  & Parameter  & Value \\
			\midrule
			$P_0$ & 0.1   & $ \rho $ &  1.225   \\
            $U_{tip}$   &  200   &  $s$   & 0.05 \\
			$P_i$     & 0.2    & $ A $ & 0.79 \\
            $v_0$   & 7.2  & $d_0$   & 0.3     \\
            \toprule
	\end{tabular}}
\end{table}
\begin{table}[h]
	\centering{}
    \label{table1}
	\textbf{Table \uppercase\expandafter{\romannumeral2}}~~SIMULATION PARAMETERS of SYSTEM OPTIMIZATION.\\ \vspace{0.5em}
	\setlength{\tabcolsep}{2mm}{
		\begin{tabular}{ll|ll} \toprule
			Parameter   & Value  & Parameter   & Value \\
			\midrule
			$N$     & 10$\sim $40     & $ \eta $ &  0.6 $\sim$ 0.9  \\
			$M$     & 4 $\sim $32   & $ \epsilon $ & $10^{-7}$  \\
            $\beta$    &  1.05    &  $ x_b $ & -60 m      \\
            $P$   & 30 $\sim$ 47 dBm  & $ y_b $ &  $0$ m  \\
            $\beta_0$   & 1.8 & $ N_\text{MAX} $ & 1000 \\
            $R_{\min}$   & 0.1 bit/s/Hz & $B$  & 1 Hz \\
            $h$ & $ 5 \sim 50$  m & $\kappa$   & 0.05\\
            $y_{\max}$   & 50 m & $y_{\min}$ & -50 m\\
            $x_{\min}$ & -50 m & $x_{\max}$ & 50 m  \\
            \toprule
	\end{tabular}}
\end{table}

Fig. 3 shows the EE performance for the wireless-powered UAV-assisted NOMA system under different numbers of user pairs ranging from 10 to 30.
We use ``SMG'', ``RA'' and ``LD'' to denote the supermodular game, resource allocation and location design, respectively.
We also consider the OMA (i.e., OFDMA) transmission scheme developed in \cite{8809094}.
As shown in Fig. 3, with the increasing number of user pairs, the system EE slowly increases and gradually stabilizes.
Fig. 3 also shows that the proposed Algorithm 4 (i.e., SMG-based RA \& LD) with energy harvesting (EH) outperforms that without EH and the OMA-based alternative.
For example, when $N$ = 25 and $p_n$ = 1 W for the $n$-th pair users ($\forall n$), the EE of our algorithm is 80$\%$ higher than that without EH and 25$\%$ higher than the OMA-based algorithm.
This is because with the power-domain multiplexing and EH, the UAV can efficiently utilize the available spectrum in the NOMA system and hence substantially improve the EE.
Additionally, we compare the proposed Algorithm 4 under $p_n$ = 1 W and $p_n$ = 2 W. The algorithm is better when $p_n$ = 2 W, since a reasonably higher power within the control range for each user pair can improve the data rate and subsequently the EE.

\begin{figure}[t]
\center
\includegraphics[width=8cm]{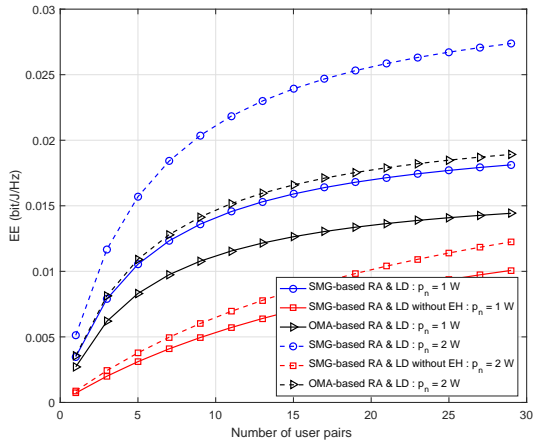}
\caption{EE comparison of the SMG-based RA and LD algorithm and OMA-based alternative.}
\end{figure}

Fig. 4 shows the EE performance of the proposed SMG-based RA and LD algorithm, i.e., Algorithm 4, versus the transmit power of the power beacon.
``ETPA'' stands for the existing transmit power allocation method developed in \cite{7982784}.
As the transmit power of the power beacon increases, the EE first increases, then reaches its peak, and then moderately declines.
We also compare Algorithm 4 with the ETPA-based RA and LD algorithm when the number of GUs is 20, 30 and 40.
We observe that the EE performance of Algorithm 4 is more energy-efficient than that of the ETPA-based algorithm.
Interestingly, under Algorithm 4, the EE is lower when there are 30 users than it is when there are 40 users, in the case where the transmit power of the power beacon is lower than 12 W; and the other way around in the other cases.
This finding indicates that the power beacon needs to transmit more RF energy when the number of user pairs increases to sustain the UAV's services and deliver quality of service (QoS).

\begin{figure}[t]
\center
\includegraphics[width=8cm]{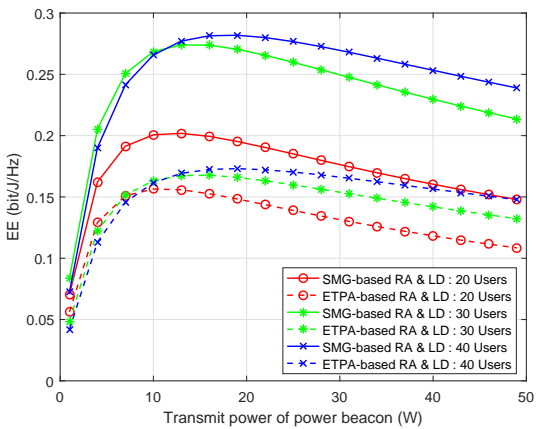}
\caption{Comparison of the system EE versus transmit power of the power beacon for the proposed SMG-based RA and LD and OMA-based RA and LD algorithms.}
\end{figure}

In Fig. 5, we compare the EE for the proposed Algorithm 4 with and without considering NOMA or UAV location design.
Algorithm 4 supporting both NOMA and LD outperforms that only supporting NOMA, the OMA-based RA and LD algorithm, and the OMA-based RA algorithm with no LD.
For example, when the UAV serves 20 users with a transmit power of 5 W, the EE of Algorithm 4 is 0.02266 bit/J/Hz, which is larger than those in the other three cases, i.e., 0.01456 bit/J/Hz, 0.008651 bit/J/Hz, and 0.005405 bit/J/Hz.
One reason for the gain of the proposed algorithm is that NOMA can improve the number of admitted GUs, thereby increasing the SE and EE.
Another reason is that a reasonable UAV location design can reduce the energy consumption resulting from path loss, and further improve the EE.
Additionally, the EE of Algorithm 4 is provided under different numbers of users.
Notably, when the transmit power of the UAV is small, the EE is higher under 20 users than it is under 30 users.
In contrast, when the transmit power is relatively large, the EE is lower under 20 users than it is under 30 users.
This is because, for more users, the low transmit power of the UAV is insufficient and cannot ensure a high system EE.

\begin{figure}
\center
\includegraphics[width=8cm]{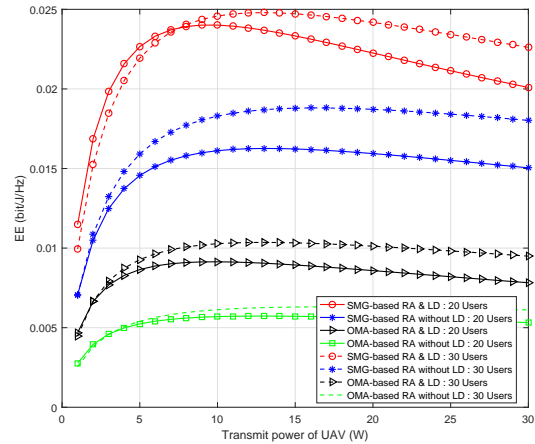}
\caption{Comparison of the EE versus transmit power of the UAV for the proposed SMG-based RA and LD algorithm under different transmission and location design schemes. }
\end{figure}

Fig. 6 shows the EE versus the power consumption of each user under the different optimization schemes and transmit powers of the UAV.
The EE of all the cases decreases, as the power consumption of each user increases.
This is because the more power is transmitted to a user, the higher total energy consumption the UAV-assisted NOMA system incurs;
thus, the total EE of the system decreases.
We also compare the proposed Algorithm 4, the ETPA-based RA and LD algorithm, and the traditional OMA-based RA algorithm under different transmit powers of the UAV.
The proposed algorithm is the best among the three algorithms, which shows the efficiency of the SMG-based intra-pair power allocation presented in Algorithm 1.
In addition, the EE of the proposed algorithm is better when $p_n$ = 2 W than it is when $p_n$ = 1 W or 1.5 W.
The results demonstrate that allocating relatively high transmit power to each user pair contributes to a high system EE, which is consistent with Figs. 3 and 5.
\begin{figure}
\center
\includegraphics[width=8cm]{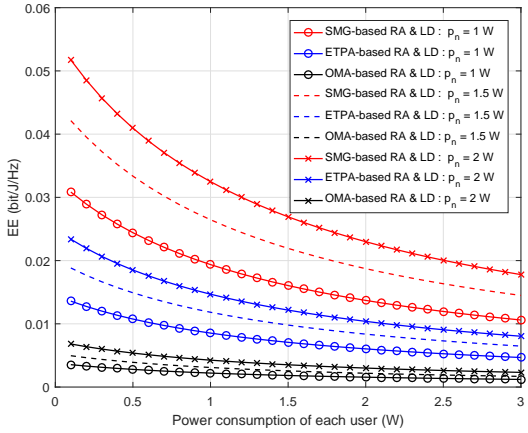}
\caption{Impact on EE of different power allocation approaches for different transmit powers.}
\end{figure}

Figs. 7(a) and 7(b) show the EE of the proposed SMG-based RA and LD algorithm and the ETPA-based RA and LD algorithm, i.e., Algorithm 4, under the different transmit antenna numbers. In Fig. 7(a), we observe that the EE of the proposed algorithm decreases, as the power consumption of each user increases. We also find that the multiantenna UAV outperforms its single-antenna counterpart.
Fig. 7(b) plots the EE of the proposed algorithm and its ETPA-based alternative as the transmit power of the power beacon increases, where the number of antennas is set to 1, 8, and 16 at the UAV.
More antennas result in better EEs under both algorithms. For example, when the transmit power of the power beacon is 20 W and the UAV is equipped with 16 antennas, the EE of the proposed  algorithm is 0.17 bit/J/Hz. The EE is higher than it is when the UAV is equipped with eight antennas (0.15 bit/J/Hz) or a single antenna (0.083 bit/J/Hz).

\begin{figure}
  \centering
  \subfigure[The EE of the proposed  algorithm.]{\includegraphics[width=3in]{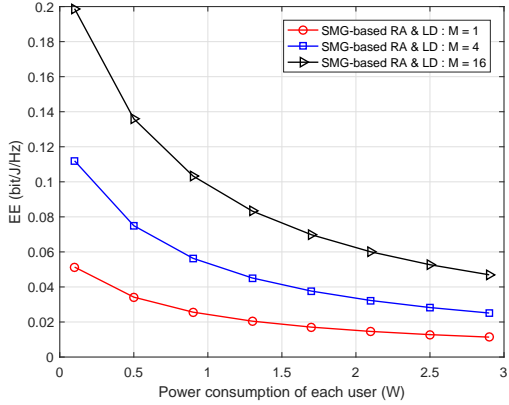}}
  \subfigure[Comparison of the considered algorithms.] {\includegraphics[width=3in]{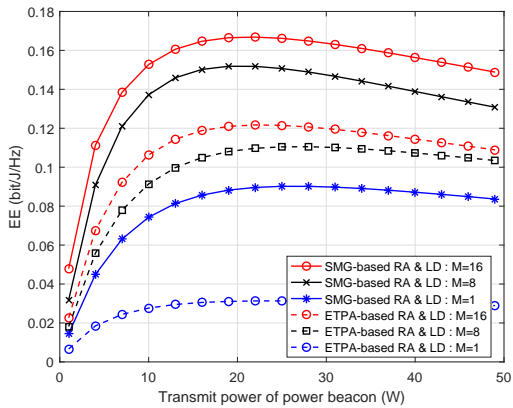}}
  \caption{The EE performance under different antenna numbers. }
\end{figure}

We also show the convergence performance of the proposed SMG-based RA and LD algorithm, the ETPA-based RA and LD algorithm, the OMA-based RA and LD algorithm, and the SMG-based RA algorithm without LD in Fig. 8.
We show that the system EE can reach the maximum with a small number of
iterations by adopting the proposed algorithm.
We also find that the proposed SMG-based RA and LD algorithm has the best EE performance compared to the other three algorithms.
When we employ the UAV LD scheme, the convergence rate is faster than that with no UAV LD scheme.
The reason is that we can choose the optimal UAV location to avoid excessive energy consumption.
\begin{figure}
\center
\includegraphics[width=8cm]{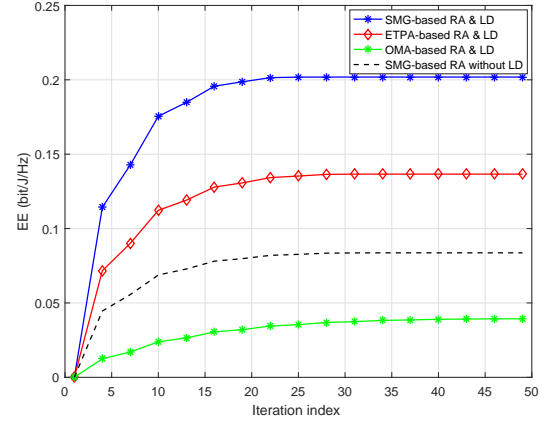}
\caption{Convergence of the proposed SMG-based RA and LD algorithm and reference algorithms.}
\end{figure}
\begin{figure}
\center
\includegraphics[width=8cm]{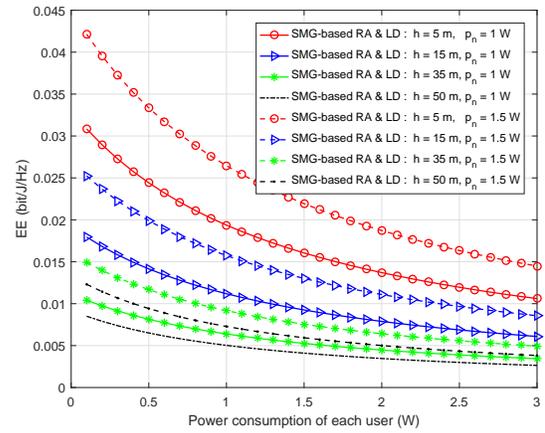}
\caption{EE versus power consumption of each user for the proposed SMG-based RA and LD algorithm under different heights and allocated transmit powers of the UAV. }\vspace{-1.5em}
\end{figure}

Fig. 9 shows that the EE changes differently with the power consumption of each user under different elevations and allocated transmit powers of the UAV. 
As the power consumption of each user increases, the system EE decreases, which is similar to Fig. 5.
Moreover, the system EE, which changes slightly under a high power consumption of each user, is almost entirely determined by the altitude of the UAV.
For example, when the altitude is 5 m, the EE is twice as large as it is at the altitude of 50 m. 
This is because the lower altitude the UAV is at, the shorter distance it has from the GU, resulting in less path loss.
However, there can be buildings, trees, or hilly terrains blocking transmission in practice.
A relatively high altitude of a UAV is recommended to ensure relatively high EE while maintaining LoS communication.
Moreover, allocating more transmit power to the $n$-th user pair can enhance the system EE.
At a given UAV altitude, the more power is allocated to the $n$-th user pair, the larger system EE is achieved.

Fig. 10 plots the total EE performance of the proposed SMG-based RA and LD algorithm versus the WPT time $\tau$. With the increasing WPT time, the EE first increases and then decreases.
We compare the EE under three cases of the transmit power, namely, Case I: $P = 20$ W, Case II: $P=30 $ W and Case III: $P=40 $ W.
We observe that, under different transmit powers of the power beacon, there is a different optimal EE.
The optimal energy harvesting time corresponding to the optimal EE is different.
This is because the larger transmit power can charge the UAV faster, thus achieving the optimal system EE faster in Case III than in the other two cases.
We also find that the maximum EE is higher in Case II than it is in Cases I and III, indicating that the energy loss of the RF chain caused by an excessive transmit power penalizes the system EE.

\begin{figure}
\center
\includegraphics[width=8cm]{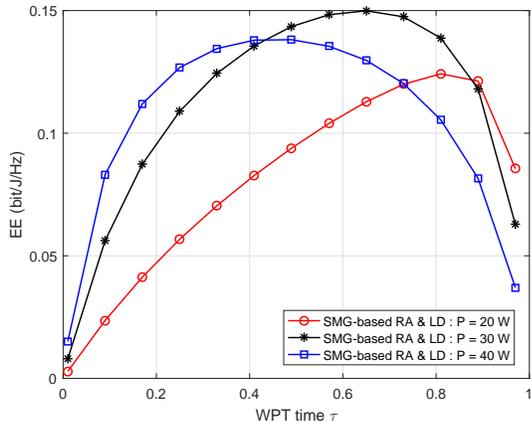}
\caption{The EE of the proposed SMG-based RA and LD algorithm under different WPT time. }
\end{figure}

Fig. 11 evaluates the EE performance of the proposed SMG-based RA and LD algorithm by optimizing the horizontal position of the UAV. We consider three cases with different $x$- and $y$-coordinates of the UAV. We observe that in each of the cases, there is an optimal UAV position accounting for the optimal EE. We also find that the optimal UAV position is close to the power beacon and a relatively large number of users, which demonstrates the effectiveness of the UAV placement.

\begin{figure*}[htbp]
\centering

\subfigure[Flat top view of 10 user systems.]{
\begin{minipage}[t]{0.33\linewidth}
\centering
\includegraphics[width=2in]{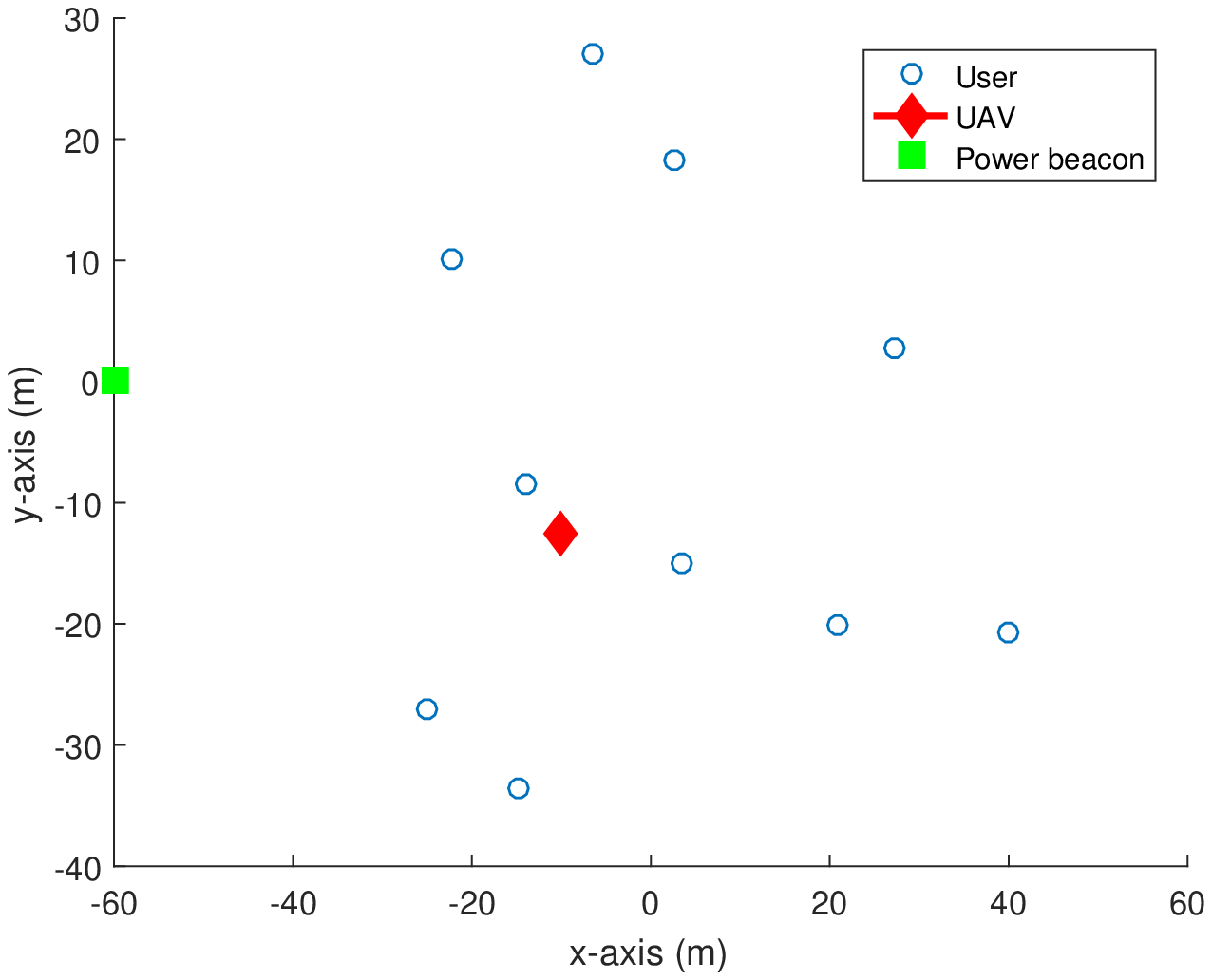}
\end{minipage}%
}%
\subfigure[Flat top view of 20 user systems.]{
\begin{minipage}[t]{0.33\linewidth}
\centering
\includegraphics[width=2in]{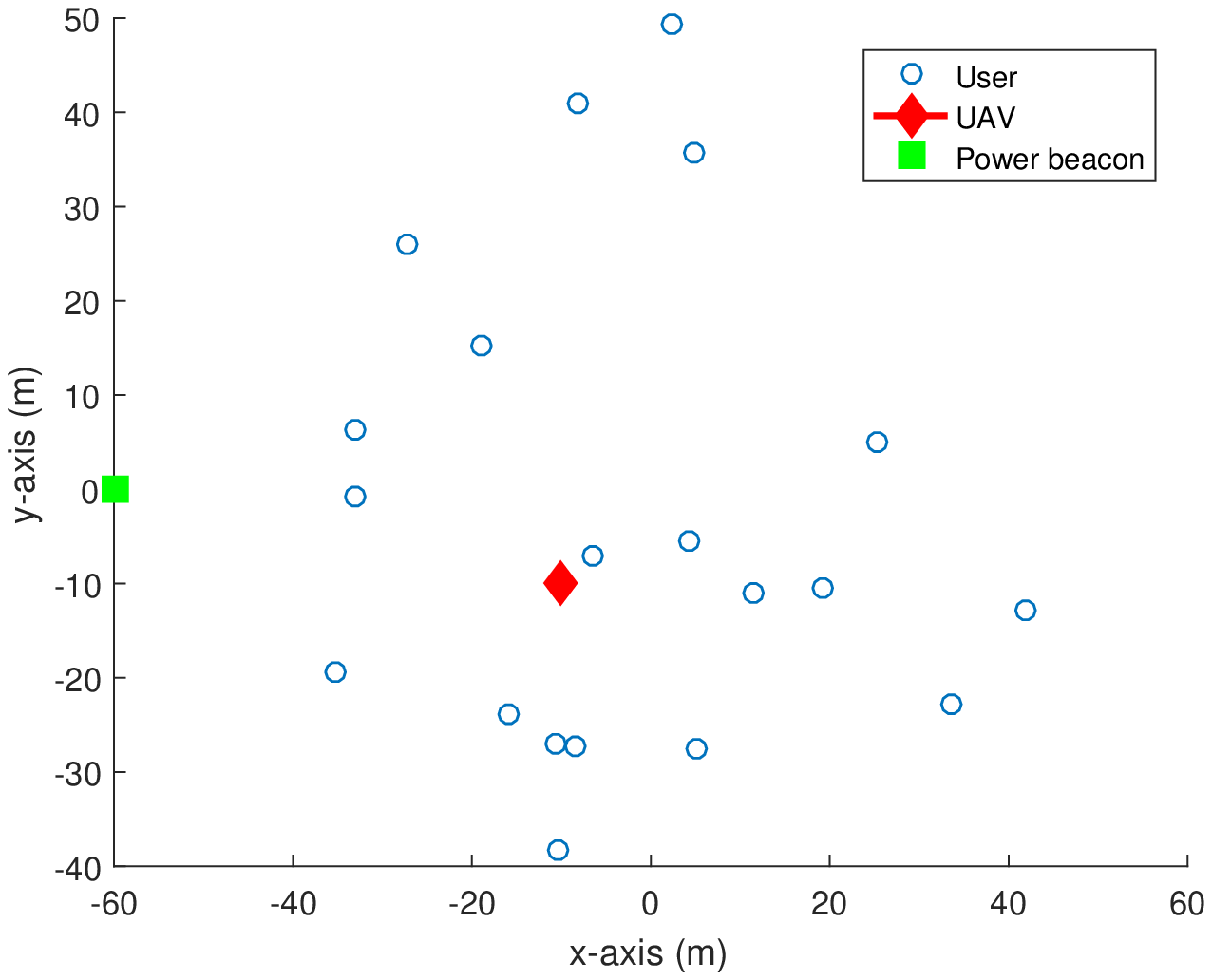}
\end{minipage}%
}%
\subfigure[Flat top view of 30 user systems.]{
\begin{minipage}[t]{0.33\linewidth}
\centering
\includegraphics[width=2in]{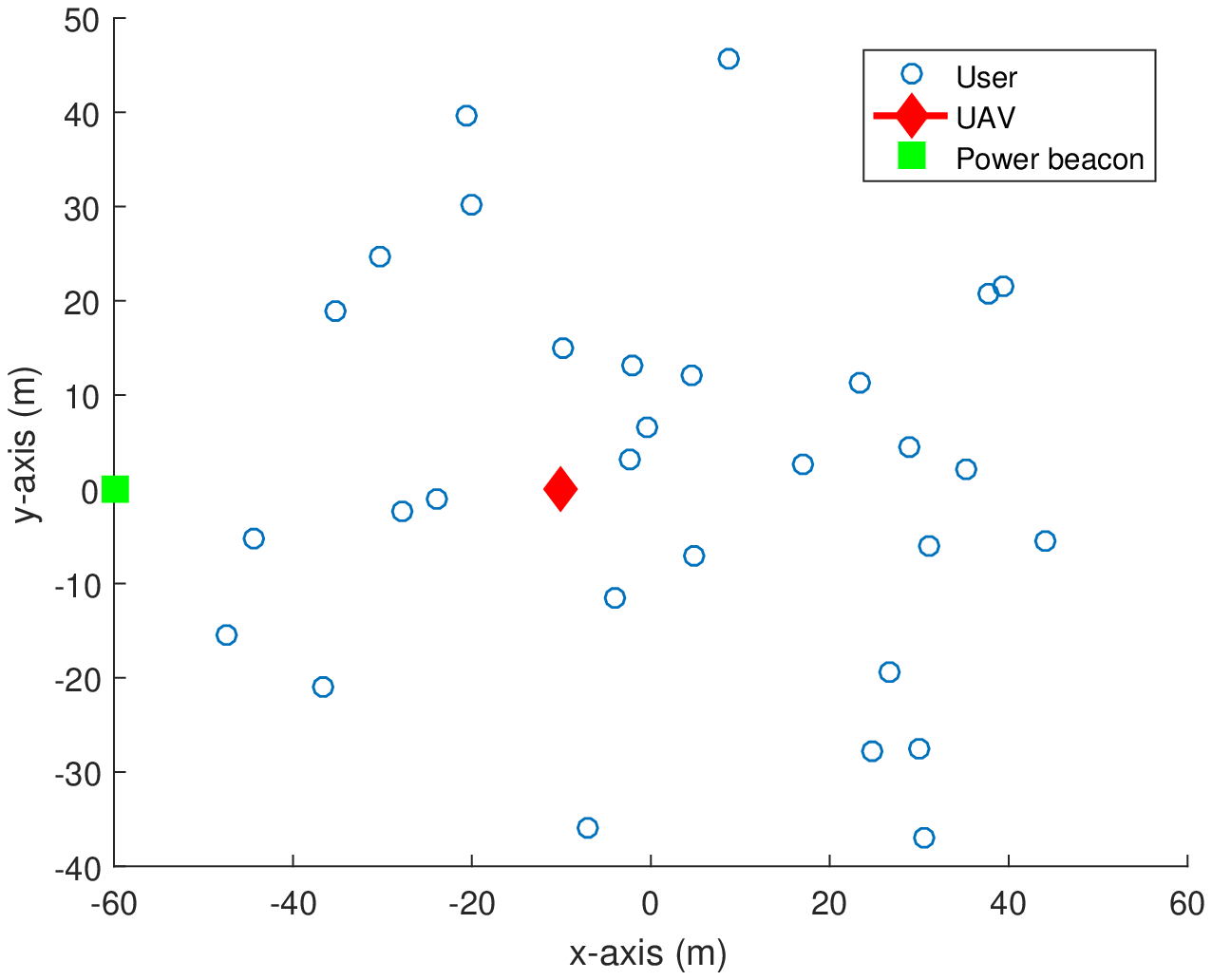}
\end{minipage}%
}%
\quad                
\subfigure[EE performance versus UAV location of 10 user systems.]{
\begin{minipage}[t]{0.33\linewidth}
\centering
\includegraphics[width=2in]{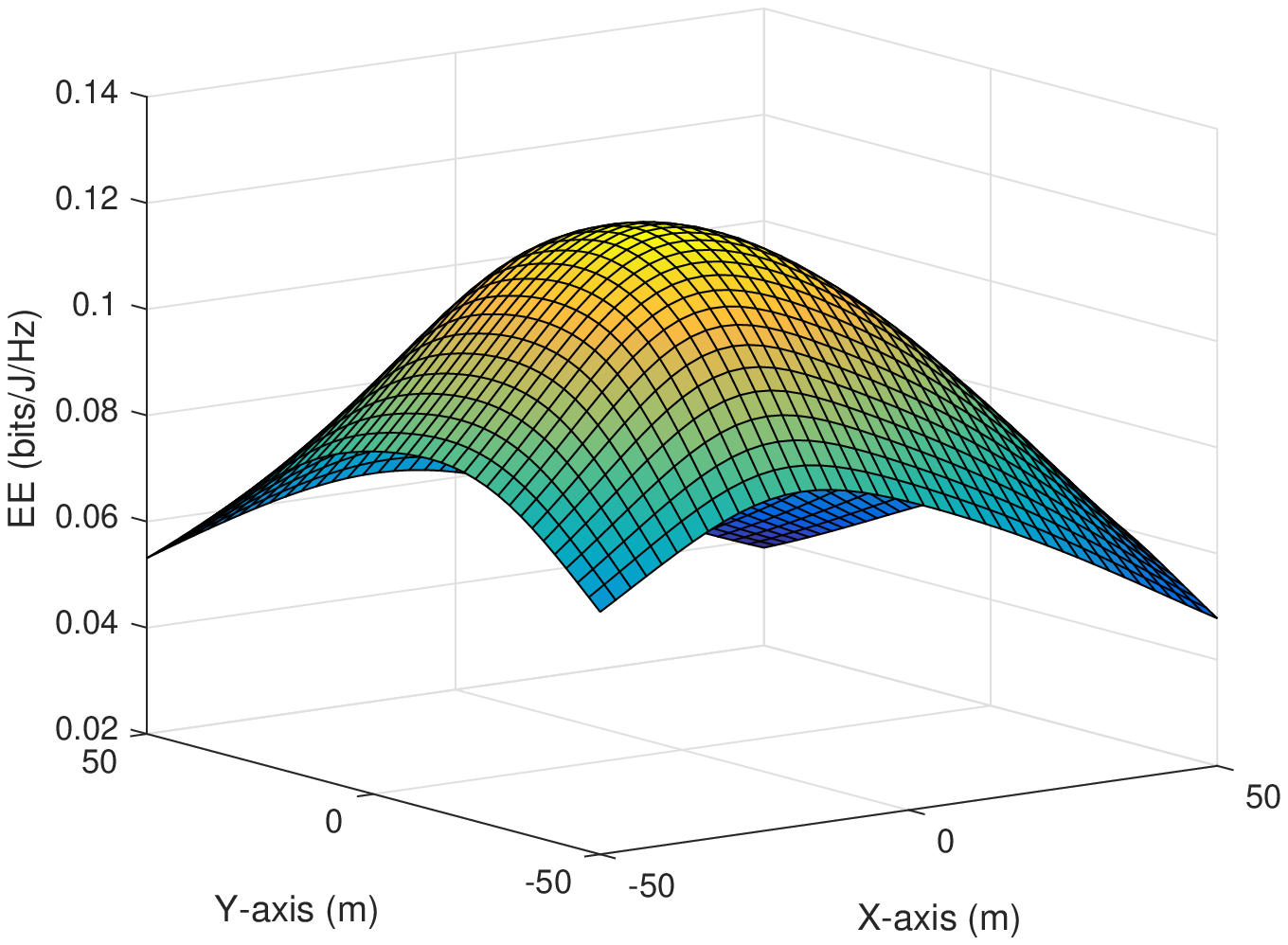}
\end{minipage}
}%
\subfigure[EE performance versus UAV location of 20 user systems.]{
\begin{minipage}[t]{0.33\linewidth}
\centering
\includegraphics[width=2in]{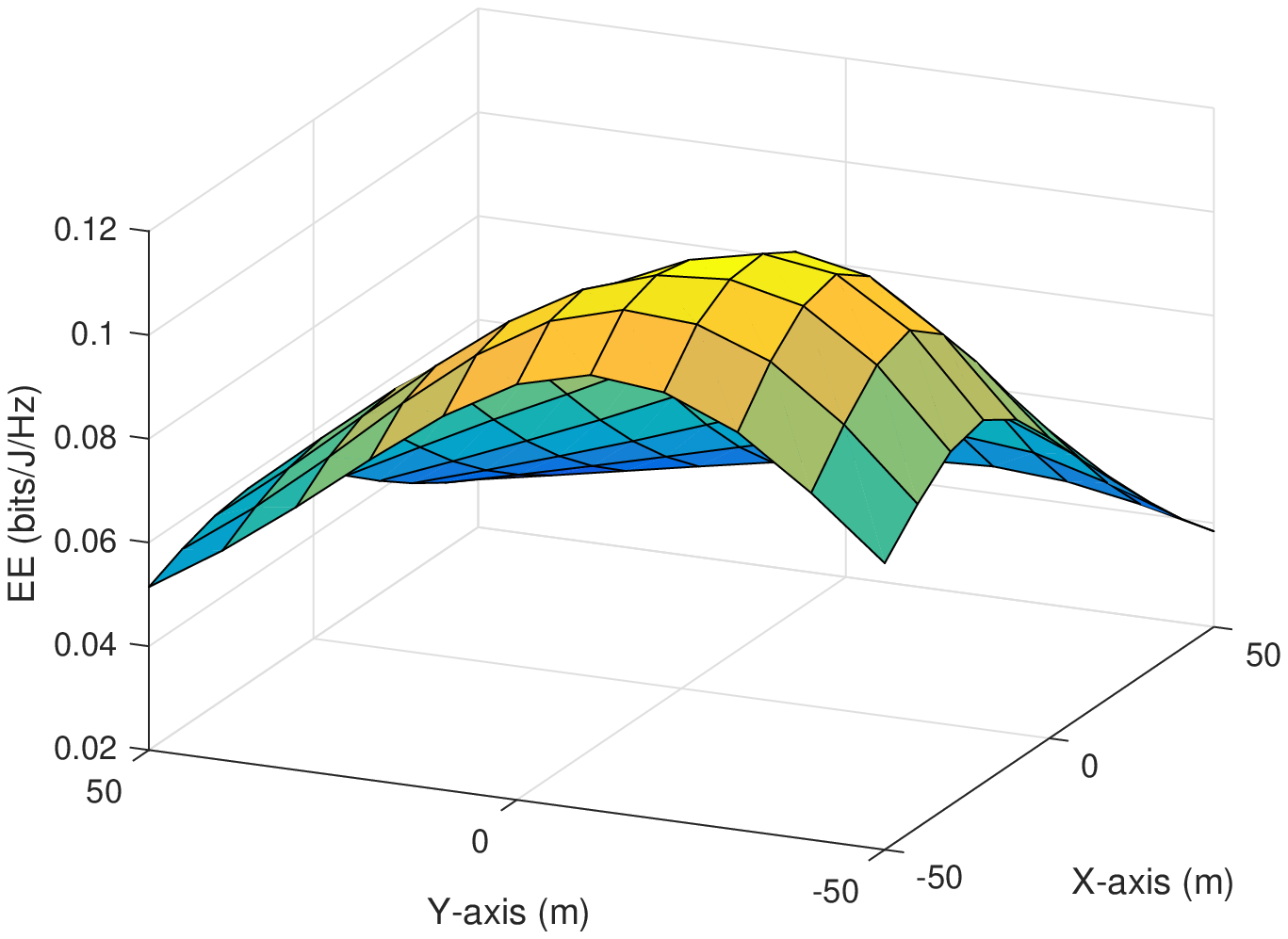}
\end{minipage}
}%
\subfigure[EE performance versus UAV location of 30 user systems.]{
\begin{minipage}[t]{0.33\linewidth}
\centering
\includegraphics[width=2in]{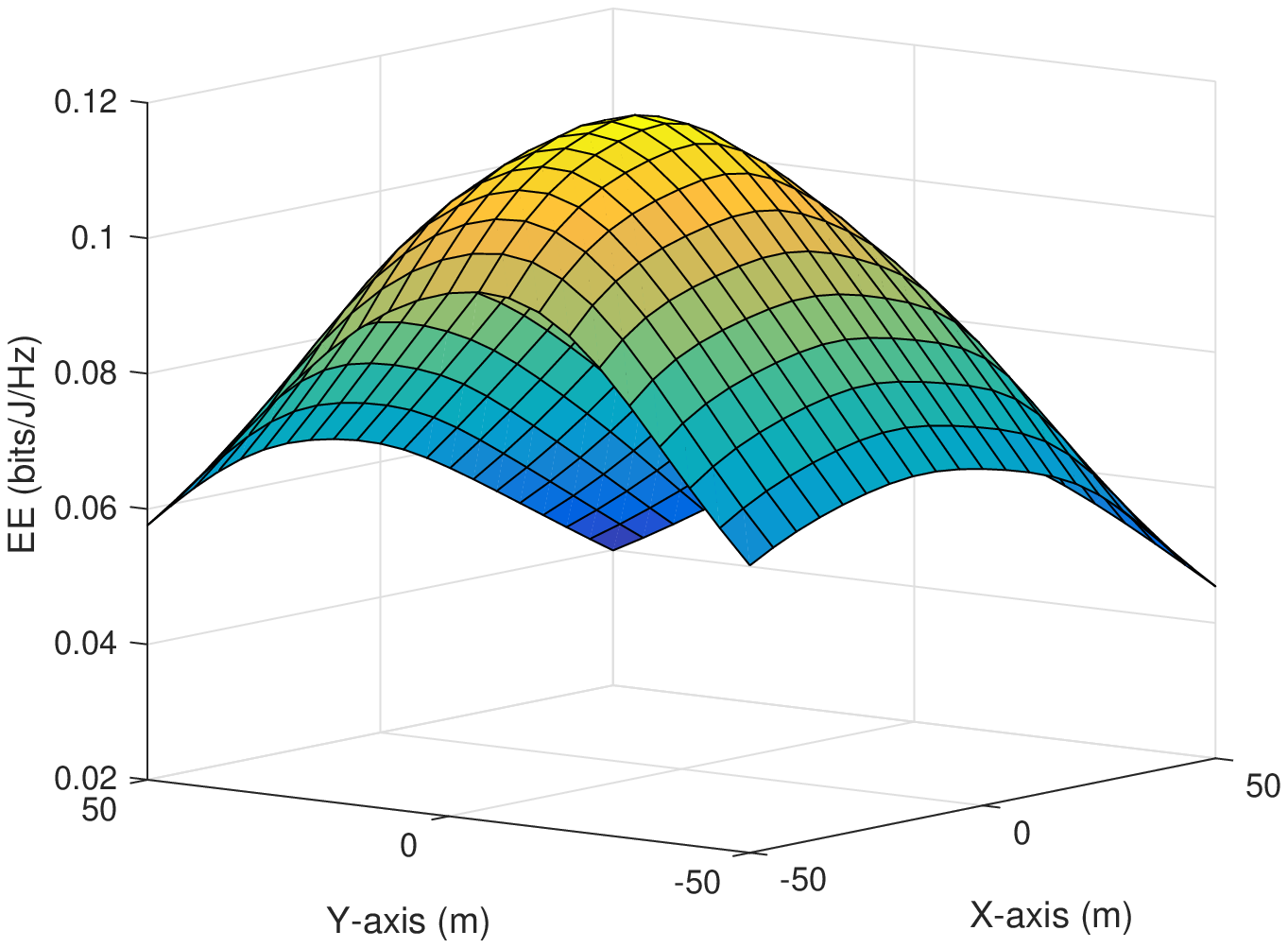}
\end{minipage}%
}%

\centering
\caption{ The EE of the proposed SMG-based RA and LD algorithm under different UAV location. }
\end{figure*}

Fig. 12 plots the EE of the proposed SMG-based RA and LD algorithm, the ETPA-based RA and LD algorithm, and the exhaustive search (ES)-based algorithm \cite{Nievergelt}. With the increasing transmit power of the power beacon, the EE first increases and then decreases. This is consistent with Figs. 3 and 7. We also observe that the gap between the ES-based algorithm and the proposed algorithm gradually decrease. Since the ES-based algorithm offers the maximum EE at the cost of a significant time complexity, the small gap between the ES-based algorithm and the proposed algorithm indicates the effectiveness of the proposed algorithm.

\begin{figure}
\center
\includegraphics[width=8cm]{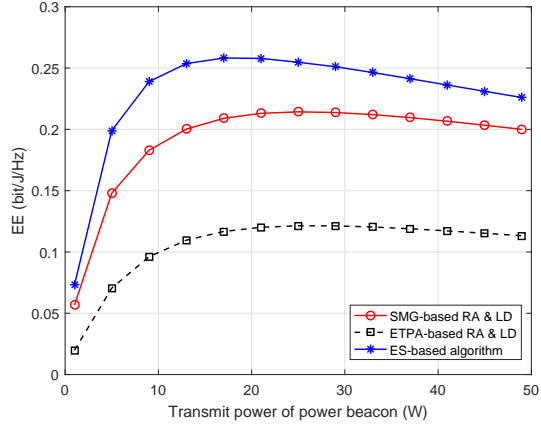}
\caption{The EE comparison of the proposed SMG-based RA and LD algorithm, the ETPA-based RA and LD algorithm, and the ES-based approach.}\vspace{-1.0em}
\end{figure}

\section{Conclusion}
In this paper, energy-efficient resource allocation was investigated in WPT-powered, UAV-assisted multiuser NOMA systems.
A joint optimization of the WPT time, UAV transmit power, NOMA power allocation coefficient, and UAV location was proposed to maximize the system EE.
To bypass the nonconvexity and intractability of this problem, we decomposed the problem into three subproblems solved iteratively following the BCD method.
We developed a supermodular game-based algorithm to reach a Nash equilibrium point of the intra-pair power allocation subproblem.
We jointly optimized the inter-pair power allocation and the WPT time allocation using the SCA method.
We utilized the Lagrange multiplier method to obtain the optimal UAV location.
Simulation showed the convergence and effectiveness of the proposed algorithm, and its superior EE performance, compared to the benchmark schemes.

\begin{appendices}
\section{Proof of Theorem 1}
We treat $f_{n,1}$ as the sum of $\frac{ \log _{2}\left(1+p_{n,1} |{h}_{n,1}|^2 \right) }{p_{n,1}+ P_{uav} + 2NP_{user} }$ and $-e^{\kappa p_{n,1}}$.
The concavity of $-e^{\kappa p_{n,1}}$ is obvious.
The second-derivatives of $\frac{ \log _{2}\left(1+p_{n,1} |{h}_{n,1}|^2 \right) }{p_{n,1}+ P_{uav} + 2NP_{user} }$ and $-e^{\kappa p_{n,1}}$ with respect to power $p_{n,1}>0$ are given by
\begin{small}
\begin{align}
&\frac{{{\partial }^{2}}\left( \frac{{{\log }_{2}} ( 1+{{p}_{n,1}}{{\left| {{h}_{n,1}} \right|}^{2}} )} {{ p_{n,1}+ P_{uav} + 2NP_{user} }}\; \right)}{\partial p_{n,1}^{2}}= \nonumber\\
&-\frac{{{\left| {{h}_{n,1}} \right|}^{4}}}{\log (2)({{P}_{uav}}+2N{{P}_{user}}+{{p}_{n,1}}){{({{\left| {{h}_{n,1}} \right|}^{2}}{{p}_{n,1}}+1)}^{2}}} \nonumber\\
 & -\frac{2{{\left| {{h}_{n,1}} \right|}^{2}}}{\log (2){{({{P}_{uav}}+2N{{P}_{user}}+{{p}_{n,1}})}^{2}}({{\left| {{h}_{n,1}} \right|}^{2}}{{p}_{n,1}}+1)} \nonumber \\
 &+\frac{2\log ({{\left| {{h}_{n,1}} \right|}^{2}}{{p}_{n,1}}+1)}{\log (2){{({{P}_{uav}}+2N{{P}_{user}}+{{p}_{n,1}})}^{3}}} <0 \\
& \frac{{{\partial }^{2}}\left( -{{e}^{\kappa {{p}_{n,1}}}} \right)}{\partial p_{n,1}^{2}}=-{{\kappa }^{2}}{{e}^{\kappa {{p}_{n,1}}}}<0. \ \ \ \ \ \ \ \ \ \ \ \ \ \
\end{align}
\end{small}
As a result, the concavity of $ f_{n,1}$ is proved.
If ${p}_{n,1}$ is fixed, we simplify $\frac{{{\left| {{h}_{n,2}} \right|}^{2}}}{ 1 + {{p}_{n,1}}{{\left| {{h}_{n,2}} \right|}^{2}}}$ to $|\hbar_{n,2}|^2$,
and then the utility function $f_{n,2}=\frac{ \log _{2}\left(1+p_{n,2} |\hbar_{n,2}|^2 \right) }{p_{n,2}+ P_{uav} + 2NP_{user} }-e^{\kappa p_{n,2}}$, which has the same form as $f_{n,1}$.

Therefore, the concavity of $f_{n,k}$ is proved.
The convexity proof of Theorem 1 is complete. \ \ \ \ \ \ \ \ \ \ \ \ \ \ \ \ \ \ \ \ \ \ \ \ \ \ \ \ \ \ \qedsymbol

\section{Proof of Theorem 3}
The strategy space is obviously nonempty. Moreover, the
strategy space is a subset of the $\mathbb{R}$ space, so it is a compact sublattice.
Therefore, the first condition is satisfied.
Additionally, the functions $f_{n,1}$ and $f_{n,2}$ are twice differentiable.
The partial derivative of $f_{n,1}$ with respect to $p_{n,1}$ and $p_{n,2}$ is written as
$ \frac{\partial^{2} f_{n,1}}{\partial p_{n,1} \partial p_{n,2}}=0$.
Therefore, it meets the second condition in Definition 3 regarding the supermodular game. The partial derivatives of $f_{n,2}$ with respect to $p_{n,2}$ and $p_{n,1}$ are given by
\begin{small}
\begin{equation}
\begin{aligned}
& \ \ \ \frac{\partial f_{n,2}}{\partial p_{n,2}}=\frac{\partial \left( {{{\log }_{2}}\left( \Upsilon  \right)}/{\left( {{p}_{n,2}}+P_s \right)}\; \right)}{\partial {{p}_{n,2}}} \\
&=\frac{\frac{{{\left| {{h}_{n,2}} \right|}^{2}}\left( {{p}_{n,2}}+P_s \right)}{{{\left| {{h}_{n,2}} \right|}^{2}}\left( {{p}_{n,1}}+{{p}_{n,2}} \right)+ 1 }-\ln \left( \Upsilon  \right)}{\ln 2{{\left( P_s +{{p}_{n,2}} \right)}^{2}}}-\kappa {{e}^{\kappa {{p}_{n,2}}}}
\end{aligned}
\end{equation}
\end{small}
\begin{small}
\begin{equation}
\begin{aligned}
  &\ \ \ \frac{\partial^{2} f_{n,2}}{\partial p_{n,2} \partial p_{n,1}} =  \frac{{{\partial }^{2}}\left( {{{\log }_{2}}\left( \Upsilon  \right)}/{\left( {{p}_{n,2}}+{{P}_{uav}}+2N{{P}_{user}} \right)}\; \right)}{\partial {{p}_{n,2}}\partial {{p}_{n,1}}} \\
 & =\frac{{{\left| {{h}_{n,2}} \right|}^{6}}{{p}_{n,2}}}{\ln 2({{P}_{uav}}+2N{{P}_{user}}+{{p}_{n,2}}){{\left( {{\left| {{h}_{n,2}} \right|}^{2}}{{p}_{n,1}}+ 1 \right)}^{3}}{{\Upsilon }^{2}}} \\
 & -\frac{{{\left| {{h}_{n,2}} \right|}^{4}}}{\ln 2({{P}_{uav}}+2N{{P}_{user}}+{{p}_{n,2}}){{\left( {{\left| {{h}_{n,2}} \right|}^{2}}{{p}_{n,1}}+ 1 \right)}^{2}}\Upsilon }  \\
  & +\frac{{{\left| {{h}_{n,2}} \right|}^{4}}{{p}_{n,2}}}{\ln 2{{({{P}_{uav}}+2N{{P}_{user}}+{{p}_{n,2}})}^{2}}{{\left( {{\left| {{h}_{n,2}} \right|}^{2}}{{p}_{n,1}}+ 1 \right)}^{2}}\Upsilon } \\
  &=\frac{{{\left| {{h}_{n,2}} \right|}^{4}}{{\left( -{{P}_{s}}{{\left| {{h}_{n,2}} \right|}^{2}}{{p}_{n,1}}-{{P}_{s}} +{{\left| {{h}_{n,2}} \right|}^{2}}p_{n,2}^{2} \right)}^{2}}}{\ln 2{{({{P}_{s}}+{{p}_{n,2}})}^{2}}\left( {{\left| {{h}_{n,2}} \right|}^{2}}{{p}_{n,1}}+ 1 \right) \Upsilon_1    },
\end{aligned}
\end{equation}
\end{small}
where $\Upsilon =\frac{{{\left| {{h}_{n,2}} \right|}^{2}}{{p}_{n,2}}}{{{\left| {{h}_{n,2}} \right|}^{2}}{{p}_{n,1}}+ 1 }+1$, $P_s = {{P}_{uav}}+2N{{P}_{user}}$ and $\Upsilon_1 = {{\left( {{\left| {{h}_{n,2}} \right|}^{2}}{{p}_{n,2}}+{{\left| {{h}_{n,2}} \right|}^{2}}{{p}_{n,1}}+ 1 \right)}^{2}}$.
Let $\frac{\partial^{2} f_{n,2}}{\partial p_{n,2} \partial p_{n,1}} \geq 0$, we can obtain
\begin{small}
\begin{equation}
 p_{n,2} \geq \sqrt{\frac{({P_{uav}+2NP_{user} })\left(p_{n,1} {{\left| {{h}_{n,2}} \right|}^{2}} + 1  \right)}{ {{\left| {{h}_{n,2}} \right|}^{2}}} }.
\end{equation}
\end{small}
Whenever the above inequality holds, the second condition is satisfied and
our proposed game is a supermodular game.
The proof of Theorem 3 is complete.
\ \ \ \ \ \ \ \ \ \ \ \ \ \ \ \ \ \ \ \ \ \ \ \ \ \ \ \ \ \ \ \ \ \ \qedsymbol

\section{Convergence Proof of Algorithm 4}

Define $\Xi_{EE,p_n,\tau}^{(r),lb} (\{\alpha_{n,k}\} ,\{{p_n},\tau\}, \{{x_0},{y_0}\}) = \Xi_{EE,p_n,\tau}^{(r)}$, where $\Xi_{EE,p_n,\tau}^{(r)}$ is the objective value of problem (\ref{equP3}) at the initialization of $\{\alpha_{n,k}\}$, $\{{p_n},\tau\}$, and $\{{x_0},{y_0}\}$.
In step 3 of Algorithm 4, since the optimal solution to problem (\ref{equP2}) is obtained based on given $\{{p_n}^{(r)},{\tau}^{(r)}\}$, and $\{{x_0}^{(r)},{y_0}^{(r)}\}$, we have
\begin{small}
\begin{equation}
\label{EQ:AP4-1}
\begin{aligned}
& \ \ \ \ \Xi_{EE} (\{\alpha_{n,k}^{(r)}\} ,\{{p_n}^{(r)},\tau^{(r)}\}, \{{x_0}^{(r)},{y_0}^{(r)}\}) \\
 &\leq \Xi_{EE} (\{\alpha_{n,k}^{(r+1)}\} ,\{{p_n}^{(r)},\tau^{(r)}\}, \{{x_0}^{(r)},{y_0}^{(r)}\}).
\end{aligned}
\end{equation}
\end{small}

For given $\{\alpha_{n,k}^{(r+1)}\}$, $\{{p_n}^{(r)}, \tau^{(r)}\}$, and $\{{x_0}^{(r)},{y_0}^{(r)}\}$ in step 4, it follows that
\begin{small}
\begin{equation}
\label{EQ:AP4-2}
\begin{aligned}
& \ \ \ \ \Xi_{EE} (\{\alpha_{n,k}^{(r)}\} ,\{{p_n}^{(r)},\tau^{(r)}\}, \{{x_0}^{(r)},{y_0}^{(r)}\}) \\
&\overset{(a)}{\mathop{\le }} \ \Xi_{EE,p_n,\tau}^{(r),lb} (\{\alpha_{n,k}^{(r+1)}\} ,\{{p_n}^{(r+1)},\tau^{(r+1)}\}, \{{x_0}^{(r)},{y_0}^{(r)}\}) \\
&  \overset{(b)}{\mathop{\le }} \ \Xi_{EE} (\{\alpha_{n,k}^{(r+1)}\} ,\{{p_n}^{(r+1)},\tau^{(r+1)}\}, \{{x_0}^{(r)},{y_0}^{(r)}\}),
\end{aligned}
\end{equation}
\end{small}

\noindent where (a) holds, as problem (\ref{equP3}) is solved optimally with the solution $\{{p_n}^{(r+1)},\tau^{(r+1)}\}$ in step 4 given $\{\alpha_{n,k}^{(r+1)}\}$ and $\{{x_0}^{(r)},{y_0}^{(r)}\}$, and (b) holds, as the objective value of problem (\ref{equPT4}) is the lower bound for the original problem (\ref{equP3}) at $\{{p_n}^{(r+1)},\tau^{(r+1)}\}$.

Finally, given $\{\alpha_{n,k}^{(r+1)}\}$, $\{{p_n}^{(r+1)},\tau^{(r+1)}\}$ and  $\{{x_0}^{(r)},{y_0}^{(r)}\}$, we have
\begin{small}
\begin{equation}
\label{EQ:AP4-3}
\begin{aligned}
 & \ \ \ \ \Xi_{EE} (\{\alpha_{n,k}^{(r+1)}\} ,\{{p_n}^{(r+1)},\tau^{(r+1)}\}, \{{x_0}^{(r)},{y_0}^{(r)}\}) \\
 &\leq \Xi_{EE} (\{\alpha_{n,k}^{(r+1)}\} ,\{{p_n}^{(r+1)},\tau^{(r+1)}\}, \{{x_0}^{(r+1)},{y_0}^{(r+1)}\}).
\end{aligned}
\end{equation}
\end{small}
Based on (\ref{EQ:AP4-1})-(\ref{EQ:AP4-3}), we can obtain
\begin{small}
\begin{equation}
\label{EQ:AP4-final}
\begin{aligned}
& \ \ \ \ \Xi_{EE} (\{\alpha_{n,k}^{(r)}\} ,\{{p_n}^{(r)},\tau^{(r)}\}, \{{x_0}^{(r)},{y_0}^{(r)}\})\\
& \leq \Xi_{EE} (\{\alpha_{n,k}^{(r+1)}\} ,\{{p_n}^{(r+1)},\tau^{(r+1)}\}, \{{x_0}^{(r+1)},{y_0}^{(r+1)}\}),
\end{aligned}
\end{equation}
\end{small}
which shows that the objective value of (\ref{equP1}) is nondecreasing after each iteration of Algorithm 4. The convergence of Algorithm 4 is proved.
\ \ \ \ \ \ \ \ \ \ \ \ \ \ \ \ \ \ \ \ \ \ \ \ \ \ \ \ \ \ \ \ \ \ \ \ \ \ \ \ \ \ \ \ \ \ \ \ \ \ \ \ \ \ \ \ \ \ \ \ \qedsymbol
\end{appendices}

\vspace{-1em}

\bibliographystyle{IEEEtran}
\bibliography{ciations}

\begin{IEEEbiography}[{\includegraphics[width=1in,height =1.25in,clip,keepaspectratio]{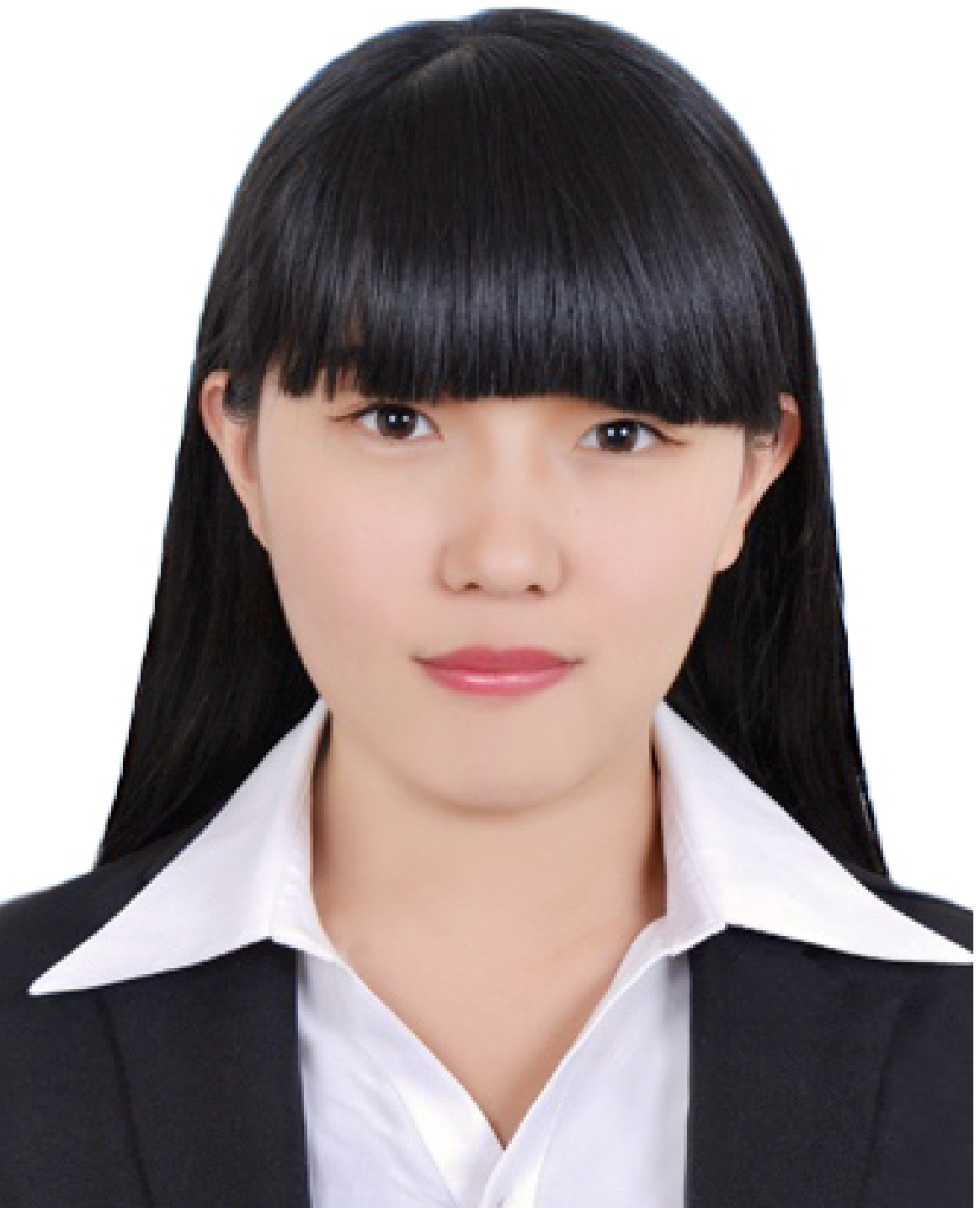}}]{Zhongyu Wang} (S'19) received the M.S. degree in computer science and technology from School of Information Science and Engineering, Yanshan University, Qinhuangdao, China. She is currently working toward the Ph.D. degree in information and communication engineering with the School of Information and Communication Engineering, Beijing University of Posts and Telecommunications, Beijing, China. Her current research interests include ultra-reliable low latency communication, non-orthogonal multiple access, massive MIMO, energy harvesting technology, and UAV communication.
\end{IEEEbiography}

\begin{IEEEbiography}[{\includegraphics[width=1in,height=1.25in,clip,keepaspectratio]{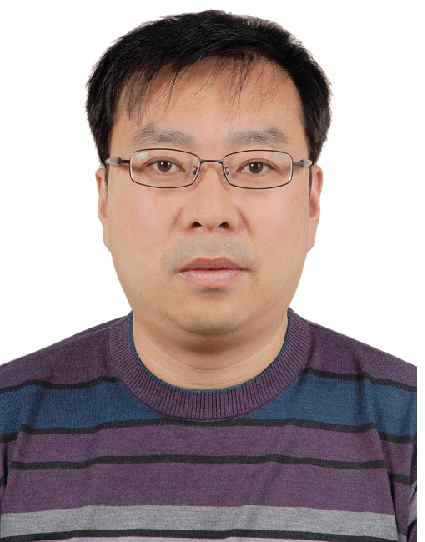}}]{Tiejun Lv}
(M'08-SM'12) received the M.S. and Ph.D. degrees in electronic engineering from the University of Electronic Science and Technology of China (UESTC), Chengdu, China, in 1997 and 2000, respectively. From January 2001 to January 2003, he was a Postdoctoral Fellow with Tsinghua University, Beijing, China. In 2005, he was promoted to a Full Professor with the School of Information and Communication Engineering, Beijing University of Posts and Telecommunications (BUPT). From September 2008 to March 2009, he was a Visiting Professor with the Department of Electrical Engineering, Stanford University, Stanford, CA, USA. He is the author of three books, more than 100 published IEEE journal papers and 200 conference papers on the physical layer of wireless mobile communications. His current research interests include signal processing, communications theory and networking. He was the recipient of the Program for New Century Excellent Talents in University Award from the Ministry of Education, China, in 2006. He received the Nature Science Award in the Ministry of Education of China for the hierarchical cooperative communication theory and technologies in 2015.
\end{IEEEbiography}

\begin{IEEEbiography}[{\includegraphics[width=1in,height=1.25in,clip,keepaspectratio]{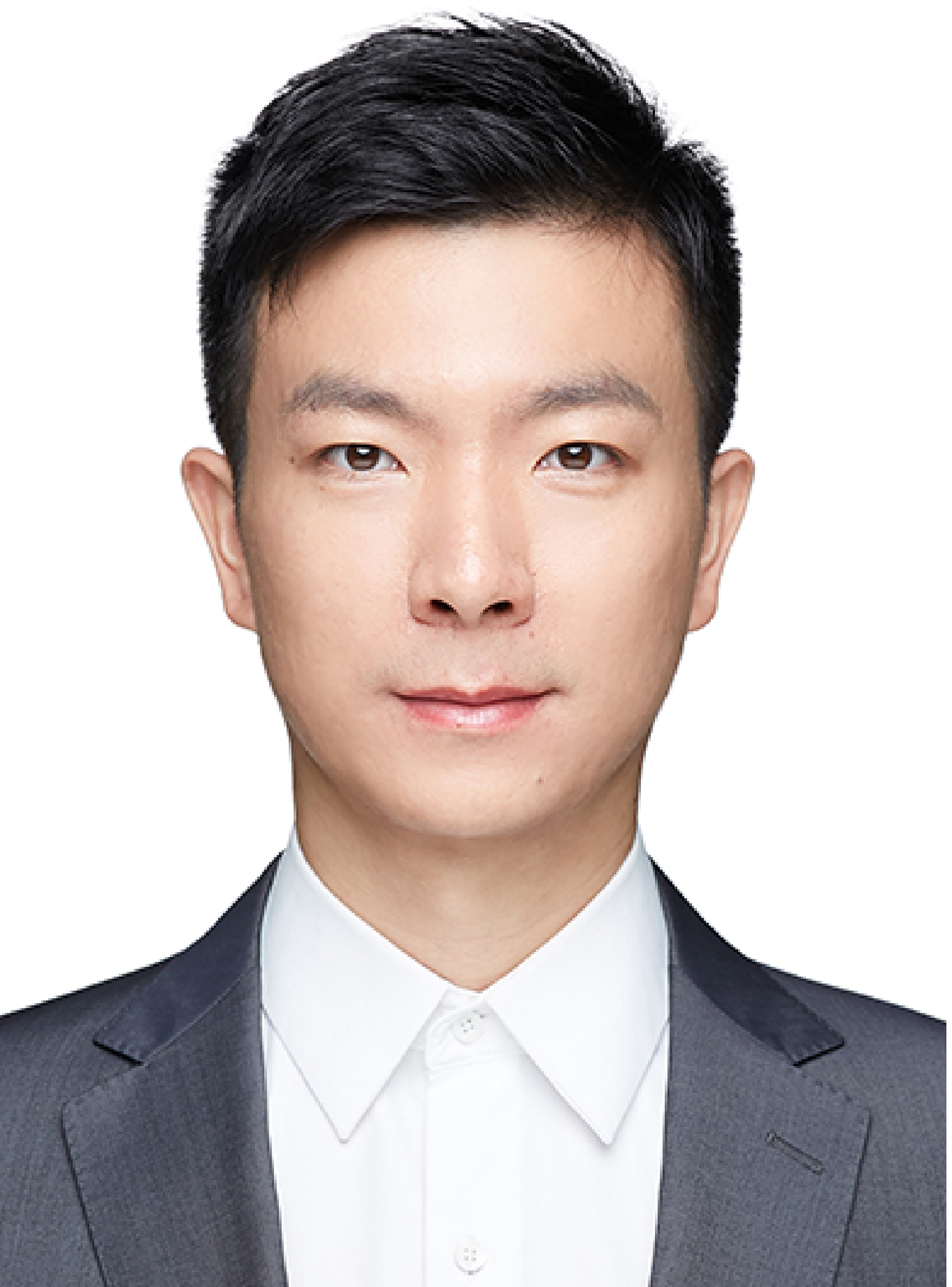}}]{Jie Zeng} (M'09-SM'16) received the B.S. and M.S. degrees from Tsinghua University in 2006 and 2009, respectively, and received two Ph.D. degrees from Beijing University of Posts and Telecommunications in 2019 and the University of Technology Sydney in 2021, respectively.

From July 2009 to May 2020, he was with the Research Institute of Information Technology, Tsinghua University. From May 2020 to April 2022, he was a postdoctoral researcher with the Department of Electronic Engineering, Tsinghua University. Since May 2022, he has been an associate professor with the School of Cyberspace Science and Technology, Beijing Institute of Technology.

His research interests include 5G/6G, URLLC, satellite Internet, and novel network architecture. He has published over 100 journal and conference papers, and holds more than 40 Chinese and international patents. He participated in drafting one national standard and one communication industry standard in China.

He received Beijing's science and technology award of in 2015, the best cooperation award of Samsung Electronics in 2016, and Dolby Australia's best scientific paper award in 2020.
\end{IEEEbiography}

\begin{IEEEbiography}[{\includegraphics[width=1in,height=1.25in,clip,keepaspectratio]{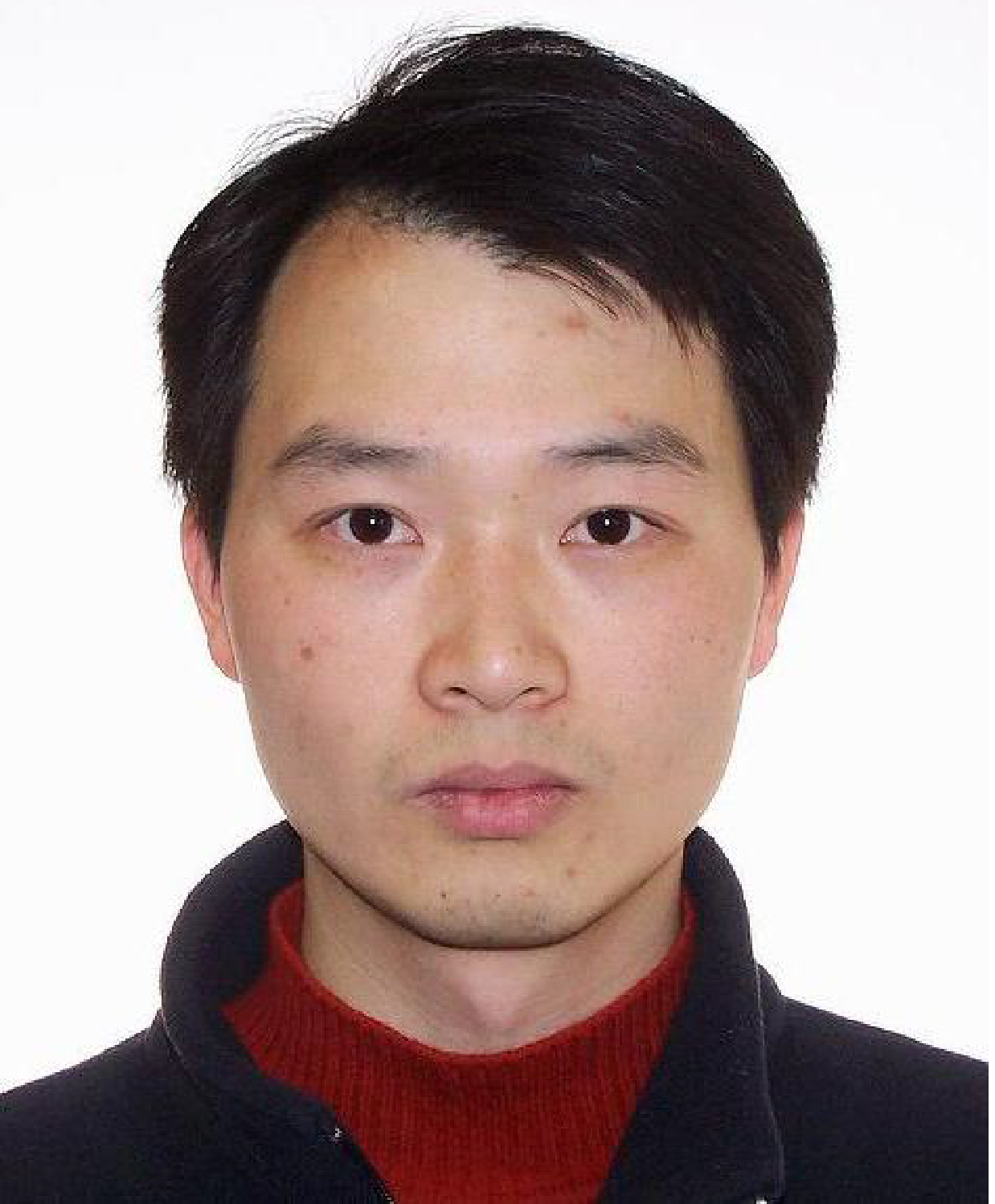}}]{Wei Ni} (M'09-SM'15) received the B.E. and Ph.D. degrees in Electronic Engineering from Fudan University, Shanghai, China, in 2000 and 2005, respectively. Currently, he is a Principal Research Scientist at CSIRO, Sydney, Australia, and an Adjunct Professor at the University of Technology Sydney and Honorary Professor at Macquarie University. He was a Postdoctoral Research Fellow at Shanghai Jiaotong University from 2005 -- 2008; Deputy Project Manager at the Bell Labs, Alcatel/Alcatel-Lucent from 2005 to 2008; and Senior Researcher at Devices R\&D, Nokia from 2008 to 2009. He has authored five book chapters, more than 200 journal papers, more than 80 conference papers, 25 patents, and ten standard proposals accepted by IEEE. His research interests include online learning, stochastic optimization, as well as their applications to system efficiency, integrity and privacy.

Dr Ni is the Chair of IEEE Vehicular Technology Society (VTS) New South Wales (NSW) Chapter since 2020, Editor of IEEE Transactions on Wireless Communications since 2018, and Editor of IEEE Transactions on Vehicular Technology since 2022. He served first the Secretary and then the Vice-Chair of IEEE NSW VTS Chapter from 2015 to 2019, Track Chair for VTC-Spring 2017, Track Co-chair for IEEE VTC-Spring 2016, Publication Chair for BodyNet 2015, and Student Travel Grant Chair for WPMC 2014.
\end{IEEEbiography}

\end{document}